\renewcommand{\emptyset}{\varnothing}
\renewcommand\section{\@startsection {section}{1}{\z@}%
                                   {-3.5ex \@plus -1ex \@minus -.2ex}%
                                   {2.3ex \@plus.2ex}%
                                   {\normalfont\LARGE\bfseries}}
\renewcommand\subsection{\@startsection{subsection}{2}{\z@}%
                                     {-3.25ex\@plus -1ex \@minus -.2ex}%
                                     {1.5ex \@plus .2ex}%
                                     {\normalfont\Large\bfseries}}
\renewcommand\subsubsection{\@startsection{subsubsection}{3}{\z@}%
                                     {-3.25ex\@plus -1ex \@minus -.2ex}%
                                     {1.5ex \@plus .2ex}%
                                     {\normalfont\large\bfseries}}
\theoremstyle{definition}
\newtheorem{remark}{Remark}
\newtheorem{conjecture}{Conjecture}
\newtheorem{example}{Example}
\newtheorem{definition}{Definition}
\newtheorem{Symbol}{Notation}
\newtheorem{theorem}{Theorem}
\newtheorem{corollary}{Corollary}
\newtheorem{proposition}{Proposition}
\newcommand{\ee}{\mathbb{E}}
\newcommand{\zz}{\varepsilon\mathbb{Z}^2}
\newcommand{\zp}{\varepsilon\mathbb{Z}_+}
\newcommand{\z}{\varepsilon\mathbb{Z}}
\newcommand{\Z}{\mathbb{Z}}
\DeclarePairedDelimiter\floor{\lfloor}{\rfloor}
\begin{document}

\author{Novikov Ivan}
\title{\vspace{-2.1cm}Feynman checkers: the probability to find an electron vanishes nowhere inside the light cone \vspace{-0.42cm}}
\date{}

\maketitle

\vspace{-1.5cm}

\begin{abstract}
We study Feynman checkers, the most elementary model of electron motion introduced by R.~Feynman. For the model, we prove that the probability to find an electron vanishes nowhere inside the light cone. We also prove several results on the average electron velocity. In addition, we present a lot of identities related to the model.

\textbf{Keywords:} Feynman checkerboard, quantum mechanics, average velocity, Dirac equation.
\end{abstract}

\section{Introduction}

This paper is on the most elementary model of one-dimensional electron motion that is known as ``Feynman checkers''. This model was introduced by R.~Feynman around the 1950s and published in 1965; see \cite[Problem 2.6]{Feynman}. Afterwards, a large amount of physical articles on the model appeared (see, for example, \cite{Quant,Jacobson,Konno,Narlikar,Ord}). But the first mathematical work \cite{main} on the subject appeared, apparently, only in 2020. We use the inessential modification of the model from \cite{Feynman} that was presented in \cite{main}. Many properties of our model are parallel to usual quantum mechanics: there are analogues of Dirac equation (Proposition~\ref{Dirac}), probability/charge conservation (Proposition~\ref{prob}), Klein--Gordon equation \cite[Proposition~7]{main}, Fourier integral \cite[Propositions~12-13]{main}, concentration of measure on the light cone \cite[Corollary~6]{main} etc. Other striking properties have sharp contrast with both continuum quantum theory and the classical random walks: the (essentially) maximal electron velocity is strictly less than the speed of light \cite[Theorem~1]{Quant}, \cite[Theorem~1(B)]{main}; adding absorbing boundary increases the probability of returning to the initial point \cite[Theorems~8 and 10]{Quant}.

It should be noted that Feynman checkers almost completely identical to \emph{one-dimensional quantum walk} and \emph{Hadamard walk}. These notions are discussed in \cite{Quant, Konno}; see \cite{review} for a comprehensive survey.

Our main (new) result states that the probability to find an electron at a lattice point is nonzero if there is at least one checker path from the origin to that point (Theorem~\ref{non-zero}), answering a question by A.~Ustinov. Also we present several results on the average velocity of the electron ($\S$\ref{velocity_section}). We prove that the expectation of the average electron velocity equals the time-average of the expectation of the instantaneous velocity (Proposition~\ref{velocity}), answering a question by D.~Treschev. 
Also we compute the limit value of the average electron velocity when time tends to infinity (Theorem~\ref{limit_velocity}). 
This result has been proven in \cite[Theorem~1]{weak_limit} (see also an exposition in \cite[$\S$12.2]{main}), but we present a short elementary proof. In addition, we state a lot of new identities related to the model, which were found in numeric experiments (See $\S$\ref{identities}). A few of these identities are proven and the rest are interesting open problems.

\section{Definition and examples}
\label{def}

In this section we present the definition and physical interpretation of Feynman checkers from \cite{main}.

\begin{definition}[{\cite[Definition~2]{main}}] \label{def-mass}
Fix $\varepsilon>0$ and $m\geqslant 0$ called \emph{lattice step} and \emph{particle mass} respectively. Consider the lattice $\zz=\{\,(x,t):x/\varepsilon,t/\varepsilon\in\mathbb{Z}\,\}$. The elements of $\zz$ are called \emph{lattice points}. A \emph{checker path} is a finite sequence of points of $\zz$ such that the vector from
each point (except the last one) to the next one equals either $(\varepsilon,\varepsilon)$ or $(-\varepsilon,\varepsilon)$. A \emph{turn} is a point of the path
(not the first and not the last one) such that the vectors from the point to the next and to the previous ones are orthogonal. For $(x, t) \in \varepsilon\mathbb{Z}^2$, where $t>0$, denote  
$${a}(x,t,m,\varepsilon):=(1+m^2\varepsilon^2)^{(1-t/\varepsilon)/2}\,i\,\sum_s (-im\varepsilon)^{\mathrm{turns}(s)},$$
where the sum is over all checker paths $s$ from $(0,0)$ to $(x,t)$ with the first step to $(\varepsilon,\varepsilon)$, and $\mathrm{turns}(s)$ is the number of turns in $s$. Denote $$P(x,t,m,\varepsilon):=|{a}(x,t,m,\varepsilon)|^2.$$
Denote by $a_1(x,t,m,\varepsilon)$ and $a_2(x,t,m,\varepsilon)$ the real and the imaginary part of $a(x,t,m,\varepsilon)$ respectively.
\end{definition} 


\begin{remark}[Physical interpretation of the model, {\cite{main}}]
We use the natural system of units, where both the speed of light and the Plank constant equal $1$. The $t-$ and $x-$coordinates are interpreted as time and position of the particle of spin 1/2 and mass $m$ respectively. Thus any checker path is interpreted as motion of a particle in 1D space with the speed of light (with change of direction). The line $x=t$ represents motion in one direction with the speed of light. In what follows, we consider the motion of an electron (so that $m$ is the mass of an electron). The number $P(x,t,m,\varepsilon)$ is called the \emph{probability to find an electron at the lattice point $(x,t)$, if the electron was emitted from the point $(0,0)$}. Such terminology is confirmed by the fact that all the numbers $P(x, t,m,\varepsilon)$ on one horizontal sum up to $1$ (see Proposition~\ref{prob}). Figure~\ref{graph} shows $P(x,1000,1,1)$ for $|x| \leqslant 1000$. Note that if $x$ is greater than $750$, then the probability is very small but still non-zero (see Theorem~\ref{non-zero}).
\end{remark}

\begin{figure}[h]
	\centering
	\includegraphics[width=0.64\linewidth]{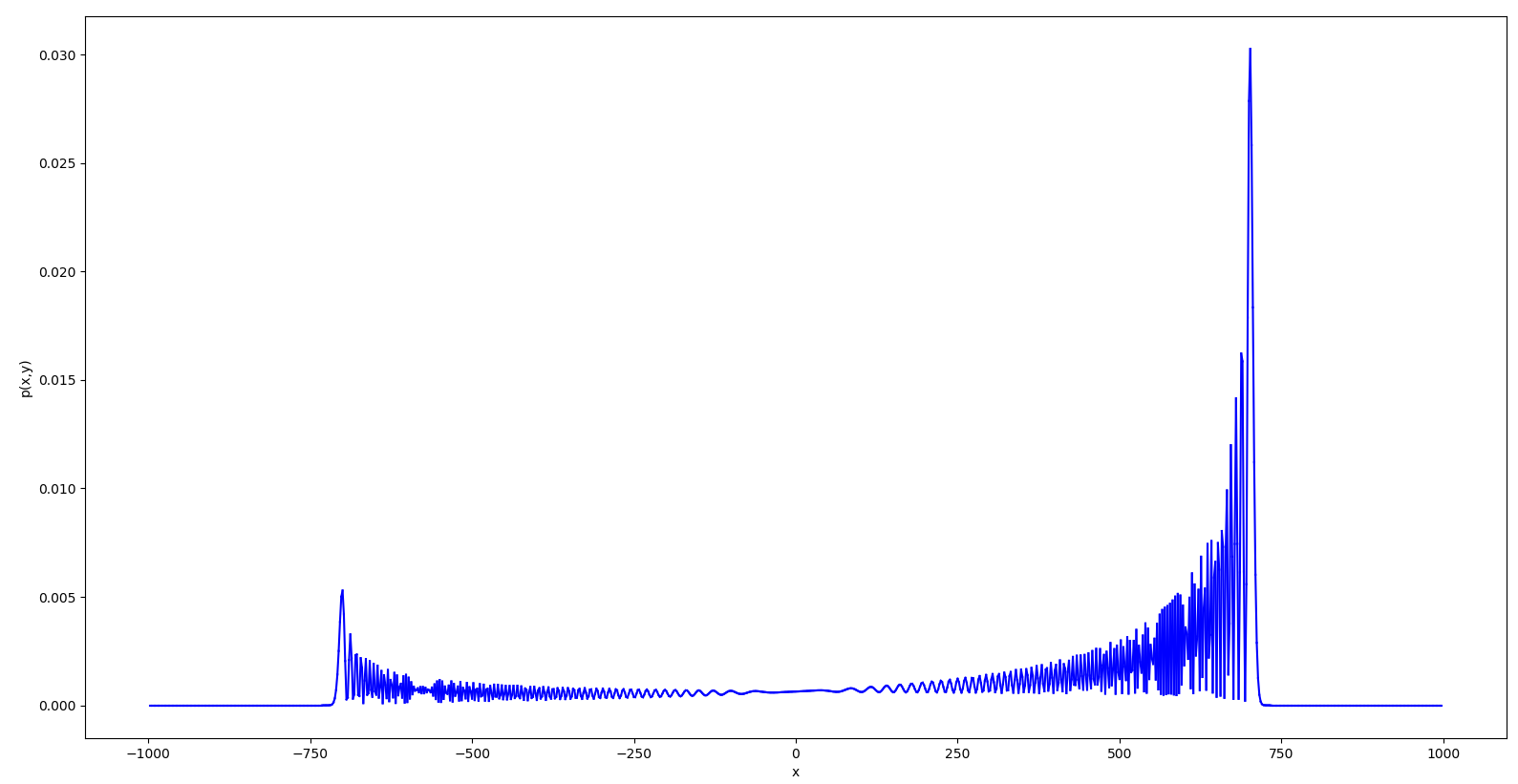}
	\caption{The graph of $P(x,1000,1,1)$ (A.~Daniyarkhodzhaev--F.~Kuyanov). Note that for $|x| \geqslant 750$ the probability is very small (but still non-zero).}
	\label{graph}
\end{figure}

\begin{example}
Let us compute $a(0,4\varepsilon,m,\varepsilon)$. Figure~\ref{example1} shows all three checker paths from $(0,0)$ to $(0, 4\varepsilon)$ starting with an upward-right move.  Thus by definition $a(0,4\varepsilon,m,\varepsilon) = - \frac{m^2 \varepsilon^2}{(1+m^2\varepsilon^2)^{3/2}} i$.
\end{example}

\begin{figure}[H]
\begin{center}
\includegraphics[width=0.17\textwidth]{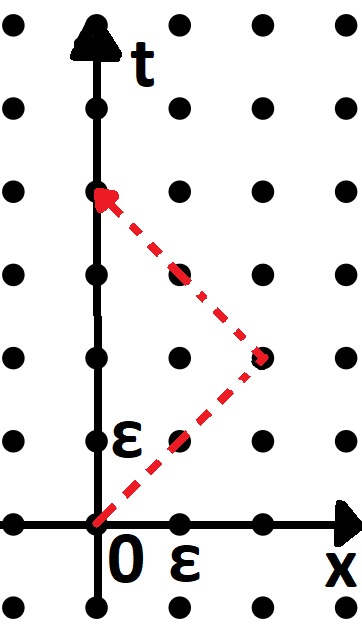}
\hfill
\includegraphics[width=0.17\textwidth]{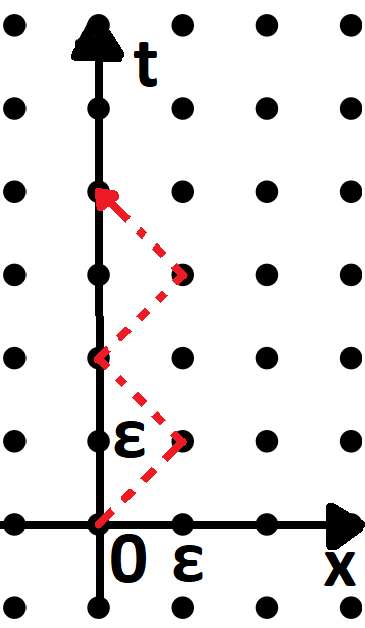} 
\hfill
\includegraphics[width=0.17\textwidth]{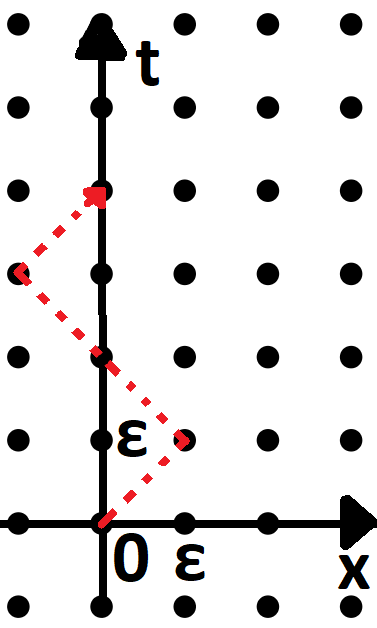} 
\caption{Checker paths contributing to $a(0,4\varepsilon,m,\varepsilon)$}
\label{example1}
\end{center}
\end{figure}

\begin{example} \label{edge}
It is easy to show that for each $x\in\zp$ we have: \\
$1) \: a(x,x,m,\varepsilon) = (1+m^2\varepsilon^2)^{(1-x/\varepsilon)/2} \, i ;$\\
$2) \: a(-x,x+2\varepsilon,m,\varepsilon) = m \varepsilon \, (1+m^2\varepsilon^2)^{-(1+x/\varepsilon)/2}.$
\end{example}

Let us  present several tables that show $a(x,t,m,\varepsilon)$ and $P(x,t,m,\varepsilon)$ for small $x$ and $t$. In Table~\ref{a-table}, the number in a cell $(x,t)$ is $a(x,t,m,\varepsilon)$, and an empty cell means that $a(x,t,m,\varepsilon) = 0$. Analogously, in Table~\ref{P-table}, the number in a cell $(x,t)$ is $P(x,t,m,\varepsilon)$, and an empty cell means that $P(x,t,m,\varepsilon) = 0$. Note that for fixed $t$ the sum of the  probabilities equals $1$.

\begin{table}[H]
\begin{tabular}{|c|c|c|c|c|c|c|c|}
\hline
$4\varepsilon$&$\frac{m \varepsilon}{(1+m^2\varepsilon^2)^{3/2}}$&&$\frac{(m \varepsilon- m^3 \varepsilon^3) - m^2 \varepsilon^2 i}{(1+m^2\varepsilon^2)^{3/2}}$&&$\frac{m \varepsilon - 2 m^2 \varepsilon^2 i}{(1+m^2\varepsilon^2)^{3/2}}$&&$\frac{1}{(1+m^2\varepsilon^2)^{3/2}} i$\\
\hline
$3\varepsilon$&&$\frac{m \varepsilon}{1+m^2\varepsilon^2}$&&$\frac{m \varepsilon - m^2 \varepsilon^2 i}{1+m^2\varepsilon^2}$&&$\frac{m^2 \varepsilon^2}{(1+m^2\varepsilon^2)} i$&\\
\hline
$2\varepsilon$&&&$\frac{m\varepsilon}{\sqrt{1+m^2\varepsilon^2}}$&&$\frac{1}{\sqrt{1+m^2\varepsilon^2}} i$&&\\
\hline
$\varepsilon$&&&&$i$&&&\\
\hline
\diagbox[dir=SW,height=21pt]{$t$}{$x$}&$-2\varepsilon$&$-\varepsilon$&$0$&$\varepsilon$&$2\varepsilon$&$3\varepsilon$&$4\varepsilon$ \\
\hline
\end{tabular}
\caption{$a(x,t,m,\varepsilon)$ for small $x$ and $t$.}
\label{a-table}
\end{table}

\begin{table}[H]
\begin{tabular}{|c|c|c|c|c|c|c|c|}
\hline
$4\varepsilon$&$\frac{m^2 \varepsilon^2}{(1+m^2\varepsilon^2)^3}$&&$\frac{m^2 \varepsilon^2 (1 - m^2 \varepsilon^2 + m^4 \varepsilon^4)}{(1+m^2\varepsilon^2)^3}$&&$\frac{m^2 \varepsilon^2 (1 + 4 m^2 \varepsilon^2)}{(1+m^2\varepsilon^2)^3}$&&$\frac{1}{(1+m^2\varepsilon^2)^3}$\\
\hline
$3\varepsilon$&&$\frac{m^2 \varepsilon^2}{(1+m^2\varepsilon^2)^2}$&&$\frac{m^2 \varepsilon^2}{1+m^2\varepsilon^2}$&&$\frac{1}{(1+m^2\varepsilon^2)^2}$&\\
\hline
$2\varepsilon$&&&$\frac{m^2\varepsilon^2}{1+m^2\varepsilon^2}$&&$\frac{1}{1+m^2\varepsilon^2}$&&\\
\hline
$\varepsilon$&&&&$1$&&&\\
\hline
\diagbox[dir=SW,height=21pt]{$t$}{$x$}&$-2\varepsilon$&$-\varepsilon$&$0$&$\varepsilon$&$2\varepsilon$&$3\varepsilon$&$4\varepsilon$ \\
\hline
\end{tabular}
\caption{$P(x,t,m,\varepsilon)$ for small $x$ and $t$.}
\label{P-table}
\end{table}

\section{Known results}
\label{preliminaries}

In this section, we state some properties of the model, Propositions~1-5 being folklore. They (and Propositions~6-7) are proved and discussed in \cite{main} and \cite{Bogdanov}. 

\begin{Symbol}
In what follows, we use the following notation:
\begin{flalign*}
&\zz=\{\,(x,t):x/\varepsilon,t/\varepsilon\in\mathbb{Z}\,\}; \\
&\z=\{\,t : t/\varepsilon\in\Z\,\}; \\
&\zp=\{\,t>0 : t/\varepsilon\in\Z\,\}; \\
&\Z_+=\{\,t>0 : t\in\Z\,\}. &&
\end{flalign*}
\end{Symbol}

\begin{proposition}[Dirac equation; {\cite[Proposition~5]{main}}]\ \label{Dirac} For each $(x,t)\in\zz$, where $t\geqslant 2\varepsilon$, we have:
\begin{align*}
1) \: a_1(x,t,m,\varepsilon) &= \frac{1}{\sqrt{1+m^2\varepsilon^2}}(a_1(x+\varepsilon,t-\varepsilon,m,\varepsilon) + m \varepsilon \, a_2(x+\varepsilon,t-\varepsilon,m,\varepsilon)); \\
2) \: a_2(x,t,m,\varepsilon) &= \frac{1}{\sqrt{1+m^2\varepsilon^2}}(a_2(x-\varepsilon,t-\varepsilon,m,\varepsilon) - m \varepsilon \, a_1(x-\varepsilon,t-\varepsilon,m,\varepsilon)).
\end{align*}
\end{proposition}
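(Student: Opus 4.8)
The plan is to prove both identities simultaneously by decomposing each checker path according to the direction of its \emph{last} step, in the same spirit that the definition of $a(x,t,m,\varepsilon)$ fixes the first step. Fix $(x,t)$ with $t\geqslant 2\varepsilon$. Every checker path from $(0,0)$ to $(x,t)$ whose first step is $(\varepsilon,\varepsilon)$ arrives at $(x,t)$ either by an up-right step from $(x-\varepsilon,t-\varepsilon)$ or by an up-left step from $(x+\varepsilon,t-\varepsilon)$. Accordingly I would introduce the partial sums $S^{R}(x,t)$ and $S^{L}(x,t)$ of $(-im\varepsilon)^{\mathrm{turns}(s)}$ over the paths whose last step is up-right, respectively up-left, so that $a(x,t,m,\varepsilon)=(1+m^2\varepsilon^2)^{(1-t/\varepsilon)/2}\,i\,(S^{R}(x,t)+S^{L}(x,t))$.

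The key observation is a parity dictionary between the last step and the real/imaginary splitting. Since the first step is up-right and every turn reverses direction, a path ends up-right iff it has an even number of turns and ends up-left iff it has an odd number of turns. Hence each summand of $S^{R}$ is $(-im\varepsilon)^{\mathrm{even}}$ and is real, while each summand of $S^{L}$ is $(-im\varepsilon)^{\mathrm{odd}}$ and is purely imaginary. Multiplying by the overall factor $i$, I would deduce $a_2(x,t,m,\varepsilon)=(1+m^2\varepsilon^2)^{(1-t/\varepsilon)/2}\,S^{R}(x,t)$ and $a_1(x,t,m,\varepsilon)=(1+m^2\varepsilon^2)^{(1-t/\varepsilon)/2}\,i\,S^{L}(x,t)$, the latter being real precisely because $iS^{L}$ is.

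Next I would derive recurrences for $S^{R}$ and $S^{L}$ by splitting off the final step. A path counted in $S^{R}(x,t)$ restricts to a path to $(x-\varepsilon,t-\varepsilon)$; appending the up-right step creates no new turn if the restricted path already ended up-right, and creates exactly one turn if it ended up-left. This yields $S^{R}(x,t)=S^{R}(x-\varepsilon,t-\varepsilon)+(-im\varepsilon)\,S^{L}(x-\varepsilon,t-\varepsilon)$, and symmetrically $S^{L}(x,t)=S^{L}(x+\varepsilon,t-\varepsilon)+(-im\varepsilon)\,S^{R}(x+\varepsilon,t-\varepsilon)$. Substituting the dictionary above, together with the elementary identities $i(-im\varepsilon)=m\varepsilon$ and $(-im\varepsilon)(-i)=-m\varepsilon$ and the normalization ratio $(1+m^2\varepsilon^2)^{(1-t/\varepsilon)/2}/(1+m^2\varepsilon^2)^{(1-(t-\varepsilon)/\varepsilon)/2}=1/\sqrt{1+m^2\varepsilon^2}$, converts these two relations into exactly identities (2) and (1).

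I expect the only delicate points to be the bookkeeping in the parity dictionary — keeping the assignment of even/odd turns to $a_2/a_1$ consistent with the powers of $i$ — and the base level $t=2\varepsilon$, where the restricted paths consist of a single step. There one sets $S^{R}(\varepsilon,\varepsilon)=1$ and $S^{L}(\varepsilon,\varepsilon)=0$ (the lone up-right step has zero turns and is not counted as ending up-left) and checks the recurrences against the values $a(\,\cdot\,,\varepsilon,m,\varepsilon)$ and $a(\,\cdot\,,2\varepsilon,m,\varepsilon)$ read off from Example~\ref{edge}. Once these conventions are fixed, the combinatorial splitting itself is routine.
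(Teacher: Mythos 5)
Your proposal is correct: the splitting of paths by the direction of the last step, the observation that the parity of $\mathrm{turns}(s)$ determines that direction (so that $a_1$ and $a_2$ collect exactly the up-left-ending and up-right-ending contributions), and the resulting recurrences are precisely the standard proof of this proposition. The paper itself does not reprove it --- it cites it as folklore established in \cite{main} --- but your argument matches the proof given there, and the parity dictionary you use is the same one the paper invokes later when identifying $P_{t,m,\varepsilon}(x,\pm)$ with $a_2^2$ and $a_1^2$.
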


\begin{proposition}[{\cite[Lemma~1]{main}}] \ \label{formulas-down} For each $(x,t) \in \zz$ , where $t \geqslant 2\varepsilon$, we have:
\begin{align*}
1) \: a_1(x,t-\varepsilon,m,\varepsilon) &= \frac{1}{\sqrt{1+m^2\varepsilon^2}} (a_1(x-\varepsilon,t,m,\varepsilon) - m \varepsilon \, a_2(x+\varepsilon,t,m,\varepsilon)) ;\\
2) \: a_2(x,t-\varepsilon,m,\varepsilon) &= \frac{1}{\sqrt{1+m^2\varepsilon^2}} (a_2(x+\varepsilon,t,m,\varepsilon) + m \varepsilon \, a_1(x-\varepsilon,t,m,\varepsilon)).
\end{align*}
\end{proposition}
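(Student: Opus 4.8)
The plan is to derive these ``downward'' recurrences directly from the ``upward'' Dirac equation of Proposition~\ref{Dirac}, rather than re-summing over checker paths. Proposition~\ref{Dirac} expresses the two components of $a$ at a point through the two components at the single point diagonally below it; read the other way around, the four quantities $a_1(x-\varepsilon,t,m,\varepsilon)$, $a_2(x+\varepsilon,t,m,\varepsilon)$, $a_1(x,t-\varepsilon,m,\varepsilon)$, $a_2(x,t-\varepsilon,m,\varepsilon)$ satisfy two linear relations, and solving for the pair at the lower level $t-\varepsilon$ produces exactly assertions 1) and 2).

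Concretely, I would first apply part 1) of Proposition~\ref{Dirac} at the point $(x-\varepsilon,t)$, which is legitimate since $t\geqslant 2\varepsilon$; after the shift $x\mapsto x-\varepsilon$ this reads
\[
\sqrt{1+m^2\varepsilon^2}\, a_1(x-\varepsilon,t,m,\varepsilon) = a_1(x,t-\varepsilon,m,\varepsilon) + m\varepsilon\, a_2(x,t-\varepsilon,m,\varepsilon).
\]
Next I would apply part 2) of Proposition~\ref{Dirac} at the point $(x+\varepsilon,t)$, which after the shift $x\mapsto x+\varepsilon$ reads
\[
\sqrt{1+m^2\varepsilon^2}\, a_2(x+\varepsilon,t,m,\varepsilon) = a_2(x,t-\varepsilon,m,\varepsilon) - m\varepsilon\, a_1(x,t-\varepsilon,m,\varepsilon).
\]
These two identities form a linear system in the unknowns $a_1(x,t-\varepsilon,m,\varepsilon)$ and $a_2(x,t-\varepsilon,m,\varepsilon)$, with coefficient matrix $\begin{pmatrix} 1 & m\varepsilon \\ -m\varepsilon & 1 \end{pmatrix}$ whose determinant is $1+m^2\varepsilon^2\neq 0$.

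Finally I would invert this $2\times 2$ system. Subtracting $m\varepsilon$ times the second identity from the first eliminates $a_2(x,t-\varepsilon,m,\varepsilon)$; the coefficient of the surviving unknown collapses to $1+m^2\varepsilon^2$ because $(m\varepsilon)(m\varepsilon)+1 = 1+m^2\varepsilon^2$, so dividing through and using $1+m^2\varepsilon^2=(\sqrt{1+m^2\varepsilon^2})^2$ gives assertion 1). Symmetrically, adding $m\varepsilon$ times the first identity to the second eliminates $a_1(x,t-\varepsilon,m,\varepsilon)$ and yields assertion 2). I expect no genuine obstacle: one only has to note that $t\geqslant 2\varepsilon$ licenses both applications of Proposition~\ref{Dirac} at the shifted abscissae $x-\varepsilon$ and $x+\varepsilon$, and that the determinant $1+m^2\varepsilon^2$ never vanishes. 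The factor $\sqrt{1+m^2\varepsilon^2}$ reappears in the answer precisely because it is the square root of this determinant.
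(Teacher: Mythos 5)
Your proposal is correct and matches the paper's own argument: the paper likewise obtains the two relations by applying Proposition~\ref{Dirac} at $(x-\varepsilon,t)$ and $(x+\varepsilon,t)$ and then solving the resulting linear system for $a_1(x,t-\varepsilon,m,\varepsilon)$ and $a_2(x,t-\varepsilon,m,\varepsilon)$. You merely spell out the elimination step explicitly, which is a welcome addition rather than a deviation.
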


\begin{proposition}[Formulae for $a_1(x,t,m,\varepsilon)$ and $a_2(x,t,m,\varepsilon)$; {\cite[Proposition~11]{main}}] For all $(x,t) \in\zz$ such that $t>|x|$ and $(x+t)/\varepsilon$ is even we have:
\label{formula}
$$a_1(x,t,m, \varepsilon) = (1+ m^2 \varepsilon^2)^{(1-t/\varepsilon)/2}\sum_{r=0}^{(t-|x|)/2\varepsilon} (-1)^r \binom{(t+x-2\varepsilon)/2\varepsilon}{r} \binom{(t-x-2\varepsilon)/2\varepsilon}{r} (m \varepsilon)^{2r+1};$$
$$a_2(x,t,m, \varepsilon) = (1+ m^2 \varepsilon^2)^{(1-t/\varepsilon)/2}\sum_{r=1}^{(t-|x|)/2\varepsilon} (-1)^r \binom{(t+x-2\varepsilon)/2\varepsilon}{r} \binom{(t-x-2\varepsilon)/2\varepsilon}{r-1} (m \varepsilon)^{2r}.$$
\end{proposition}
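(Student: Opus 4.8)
The plan is to count the checker paths directly according to their combinatorial structure and to track the number of turns through that structure. Fix $(x,t)$ with $t>|x|$ and $(x+t)/\varepsilon$ even. Every checker path from $(0,0)$ to $(x,t)$ is a sequence of $t/\varepsilon$ steps, each equal to $(\varepsilon,\varepsilon)$ (an up-right step) or $(-\varepsilon,\varepsilon)$ (an up-left step). Writing $R$ and $L$ for the numbers of up-right and up-left steps, the endpoint conditions $R+L=t/\varepsilon$ and $R-L=x/\varepsilon$ force $R=(t+x)/2\varepsilon$ and $L=(t-x)/2\varepsilon$, both positive integers precisely because $t>|x|$ and $(x+t)/\varepsilon$ is even. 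The required first step to $(\varepsilon,\varepsilon)$ means the sequence must begin with an up-right step.

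First I would express the number of turns via the block structure of the step sequence. Grouping the sequence into maximal runs of equal steps, a path with $k$ turns is exactly a path with $k+1$ such blocks, since each turn is an interior point where two consecutive steps differ in direction; and because the first step is up-right, the blocks alternate direction starting from up-right. Hence a path with an even number of turns $k=2r$ has $2r+1$ blocks, namely $r+1$ up-right blocks and $r$ up-left blocks, whereas a path with an odd number of turns $k=2r+1$ has $r+1$ blocks of each direction. Distributing the $R$ up-right steps among the up-right blocks and the $L$ up-left steps among the up-left blocks, each block being nonempty, the standard stars-and-bars count shows that the number of paths is $\binom{R-1}{r}\binom{L-1}{r-1}$ in the even case and $\binom{R-1}{r}\binom{L-1}{r}$ in the odd case.

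Next I would separate real and imaginary parts. Since $(-im\varepsilon)^{2r}=(-1)^r(m\varepsilon)^{2r}$ is real while $(-im\varepsilon)^{2r+1}=-i(-1)^r(m\varepsilon)^{2r+1}$ is imaginary, multiplying the whole sum $\sum_s(-im\varepsilon)^{\mathrm{turns}(s)}$ by the prefactor $i$ from the definition of $a$ turns the even-turn contributions into the imaginary part and the odd-turn contributions into the real part. Collecting terms, the real part equals $(1+m^2\varepsilon^2)^{(1-t/\varepsilon)/2}\sum_r(-1)^r(m\varepsilon)^{2r+1}\binom{R-1}{r}\binom{L-1}{r}$ and the imaginary part equals $(1+m^2\varepsilon^2)^{(1-t/\varepsilon)/2}\sum_r(-1)^r(m\varepsilon)^{2r}\binom{R-1}{r}\binom{L-1}{r-1}$. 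Substituting $R-1=(t+x-2\varepsilon)/2\varepsilon$ and $L-1=(t-x-2\varepsilon)/2\varepsilon$ reproduces exactly the claimed formulae for $a_1$ and $a_2$.

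The bookkeeping that requires the most care is the block count and its parity: one must verify that the alternation forced by the fixed first step assigns the extra block to the up-right direction, since this is exactly what produces the asymmetric lower binomial index ($r$ versus $r-1$) that distinguishes the $a_2$-sum from the $a_1$-sum. A minor point is the summation range: the upper limit is written as $(t-|x|)/2\varepsilon$ uniformly in the sign of $x$, and one should observe that the terms beyond the genuine block constraints vanish automatically because one of the binomial coefficients is then zero, so the stated limit is harmless. As a consistency check, the endpoint regime $t=|x|$ (excluded here but treated in Example~\ref{edge}) collapses to the single straight path, in agreement with the $r=0$ term.
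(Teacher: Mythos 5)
Your proof is correct and complete: the decomposition of a path into maximal runs, the block counts $\binom{R-1}{r}\binom{L-1}{r}$ and $\binom{R-1}{r}\binom{L-1}{r-1}$, the parity bookkeeping separating real and imaginary parts, and the observation that the extra terms at the stated upper summation limit vanish are all accurate. The paper itself gives no proof of this proposition --- it is quoted from \cite[Proposition~11]{main} --- and your run-counting argument is precisely the standard derivation given there, so there is nothing to add.
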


\begin{proposition}[Probability conservation law; {\cite[Proposition~6]{main}}] \label{prob} 
For each $t\!\in\!\zp$ we get 
$$\sum\limits_{x\in\z}P(x,t,m, \varepsilon)=1.$$
\end{proposition}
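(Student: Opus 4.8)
The plan is to argue by induction on $t\in\zp$, using the Dirac equation (Proposition~\ref{Dirac}) to pass from the row at time $t-\varepsilon$ to the row at time $t$. Throughout I suppress the arguments $m,\varepsilon$. The base case $t=\varepsilon$ is immediate: by Example~\ref{edge}(1) the only nonzero amplitude on this row is $a(\varepsilon,\varepsilon,m,\varepsilon)=i$, whence $P(\varepsilon,\varepsilon,m,\varepsilon)=1$ and the sum equals $1$. For each fixed $t$ only finitely many terms are nonzero (those with $|x|\mr t$), so every sum below is finite and the index shifts are legitimate.

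For the inductive step set $c:=1/\sqrt{1+m^2\varepsilon^2}$ and $s:=m\varepsilon/\sqrt{1+m^2\varepsilon^2}$, so that $c^2+s^2=1$ and the two identities of Proposition~\ref{Dirac} read $a_1(x,t)=c\,a_1(x+\varepsilon,t-\varepsilon)+s\,a_2(x+\varepsilon,t-\varepsilon)$ and $a_2(x,t)=c\,a_2(x-\varepsilon,t-\varepsilon)-s\,a_1(x-\varepsilon,t-\varepsilon)$. Writing $P=a_1^2+a_2^2$, I would square these, sum over $x\in\z$, and shift the summation index ($x+\varepsilon\mapsto y$ in the terms coming from $a_1$, and $x-\varepsilon\mapsto y$ in those coming from $a_2$). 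The pure squares then combine into $(c^2+s^2)\sum_y a_1(y,t-\varepsilon)^2+(c^2+s^2)\sum_y a_2(y,t-\varepsilon)^2$, which equals $\sum_y P(y,t-\varepsilon)$.

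The crucial --- and really the only --- point to check is the cross terms. Squaring $a_1(x,t)$ yields $+2cs\,a_1(x+\varepsilon,t-\varepsilon)a_2(x+\varepsilon,t-\varepsilon)$, while squaring $a_2(x,t)$ yields $-2cs\,a_1(x-\varepsilon,t-\varepsilon)a_2(x-\varepsilon,t-\varepsilon)$; after the respective index shifts both become $\pm2cs\sum_y a_1(y,t-\varepsilon)a_2(y,t-\varepsilon)$ and cancel exactly. Conceptually this cancellation expresses that one step of the evolution in Proposition~\ref{Dirac} is orthogonal --- a rotation by the angle $\arctan(m\varepsilon)$ composed with the left/right shifts of the two components --- which is exactly why the total probability is conserved. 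Hence $\sum_{x\in\z}P(x,t)=\sum_{y\in\z}P(y,t-\varepsilon)$ and the induction closes.

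I do not anticipate a real obstacle here; the only thing that requires care is performing the two different index shifts correctly, so that the cross terms are seen to cancel rather than reinforce. If tracking the signs in Proposition~\ref{Dirac} proves awkward, the same computation can be run ``downward'' via Proposition~\ref{formulas-down}, whose identities carry the identical orthogonal structure and yield the same cancellation.
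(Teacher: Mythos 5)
Your proof is correct. The paper itself does not prove Proposition~\ref{prob} --- it is listed among the folklore results and deferred to \cite{main} --- but your argument (induction on $t$, squaring the two Dirac identities, shifting indices, and observing that the cross terms cancel because the one-step evolution is orthogonal) is exactly the standard proof given there, and the very same squaring-and-cancellation computation appears verbatim in this paper's proof of Proposition~\ref{velocity}. The base case, the finiteness of the sums justifying the index shifts, and the restriction $t\geqslant 2\varepsilon$ in Proposition~\ref{Dirac} are all handled correctly.
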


\begin{proposition}[Symmetry; {\cite[Proposition~8]{main}}] For all $(x, t) \in \zz$ with  $t>0$ we have: \label{symmetry} \\
    1) $a_1(x,t,m,\varepsilon) = a_1(-x,t,m,\varepsilon)$;\\
    2) $(t-x) a_2(x,t,m,\varepsilon) = (t+x-2\varepsilon) a_2(2\varepsilon-x,t,m,\varepsilon)$.
\end{proposition}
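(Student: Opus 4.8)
The plan is to derive both identities directly from the explicit formula for $a_1$ and $a_2$ in Proposition~\ref{formula}, reducing everything to an elementary manipulation of binomial coefficients. Throughout, set $C:=(1+m^2\varepsilon^2)^{(1-t/\varepsilon)/2}$, $A:=(t+x-2\varepsilon)/2\varepsilon$, and $B:=(t-x-2\varepsilon)/2\varepsilon$, so that Proposition~\ref{formula} applies whenever $(x,t)$ lies strictly inside the light cone and $(x+t)/\varepsilon$ is even. If $(x+t)/\varepsilon$ is odd, then no checker path reaches $(x,t)$ (nor any of the reflected points), so every amplitude in sight vanishes and both identities read $0=0$; hence I may assume the correct parity from now on.

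For part~1, I would simply substitute $x\mapsto -x$ into the formula for $a_1$. This sends $A\mapsto B$ and $B\mapsto A$, leaves the prefactor $C$ and the powers $(m\varepsilon)^{2r+1}$ untouched, and does not change the upper summation limit $(t-|x|)/2\varepsilon$ because $|-x|=|x|$. Since the summand depends on $A$ and $B$ only through the product $\binom{A}{r}\binom{B}{r}$, which is symmetric in $A$ and $B$, the sum is unchanged and $a_1(-x,t,m,\varepsilon)=a_1(x,t,m,\varepsilon)$ for interior points. The boundary point $x=t$ is immediate from Example~\ref{edge}, where $a(x,x,m,\varepsilon)$ is purely imaginary so that $a_1(t,t,m,\varepsilon)=0$, while the reflected point lies on or outside the cone and also has vanishing amplitude.

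For part~2, the cleanest route is to show that the function $g(x):=(t-x)\,a_2(x,t,m,\varepsilon)$ is invariant under the reflection $x\mapsto 2\varepsilon-x$; this is exactly the asserted identity, since $t-(2\varepsilon-x)=t+x-2\varepsilon$. Writing $t-x=2\varepsilon(B+1)$ and applying the absorption identity $(B+1)\binom{B}{r-1}=r\binom{B+1}{r}$ to the formula for $a_2$, I would obtain $g(x)=2\varepsilon C\sum_{r\geqslant 1}(-1)^r\,r\,\binom{A}{r}\binom{B+1}{r}(m\varepsilon)^{2r}$, where the sum may be extended to all $r\geqslant 1$ since the binomial factors vanish beyond the original range. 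The reflection $x\mapsto 2\varepsilon-x$ sends $A\mapsto B+1$ and $B+1\mapsto A$, and the displayed expression is visibly symmetric in the two ``tops'' $A$ and $B+1$; hence $g(x)=g(2\varepsilon-x)$ whenever both $(x,t)$ and $(2\varepsilon-x,t)$ are interior.

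The remaining work, and the main obstacle, is the bookkeeping for the non-interior configurations, where one of the two points sits on or outside the light cone and Proposition~\ref{formula} does not literally apply. Because of the parity constraint one has $x\equiv t\pmod{2\varepsilon}$, so these exceptional cases are few: either $x\geqslant t$ or $2\varepsilon-x\geqslant t$. In each of them at least one amplitude vanishes (no admissible path exists), or else the corresponding coefficient $t-x$ or $t+x-2\varepsilon$ is zero while the opposite point is interior with a binomial factor $\binom{0}{r}$ forcing its amplitude to vanish as well; a short direct check using Example~\ref{edge} then confirms $g(x)=g(2\varepsilon-x)$ in every case. I expect the only real care to be needed in matching the summation limits and in verifying these few boundary points, the interior identity itself being a one-line consequence of binomial absorption.
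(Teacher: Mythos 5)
Your proof is correct and takes essentially the same route as the paper's: part 1) follows from the symmetry of the product $\binom{A}{r}\binom{B}{r}$ in the formula of Proposition~\ref{formula} under $x\mapsto -x$, and part 2) is the same factorial manipulation, which you package via the absorption identity as the visibly reflection-symmetric expression $2\varepsilon\, r\binom{A}{r}\binom{B+1}{r}$. Your additional bookkeeping of the parity and boundary cases (which check out via Example~\ref{edge}) is more careful than the paper's, but it is not a difference in method.
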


\begin{proposition}[{\cite[Theorem~5, its proof, and Remark~1]{main}}]\label{p-right-prob}
If $t\in \z_+$ and $m \varepsilon = 1$, then 
$$\sum_{x\in\z}a_{1}(x,t,m,\varepsilon)^2
 = \frac{1}{2} \sum\limits_{k=0}^{\floor{t/2\varepsilon}-1} \frac{1}{(-4)^k}\binom{2k}{k}
=\frac{1}{2\sqrt{2}}+\mathrm{O}\left(\sqrt{\frac{\varepsilon}{t}}\right).$$
\end{proposition}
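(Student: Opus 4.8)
The plan is to pass to the Fourier transform in the space variable, reduce the time evolution to powers of a fixed $2\times2$ matrix, and then read off the answer from a generating function. Write $n=x/\varepsilon$ and $T=t/\varepsilon$ (a positive integer), and recall from Definition~\ref{def-mass} that $a_1,a_2$ are real and, for fixed $T$, supported on finitely many $n$. Since $m\varepsilon=1$ we have $\sqrt{1+m^2\varepsilon^2}=\sqrt2$, so Proposition~\ref{Dirac} reads $a_1(n,T)=\tfrac{1}{\sqrt2}\bigl(a_1(n+1,T-1)+a_2(n+1,T-1)\bigr)$ and $a_2(n,T)=\tfrac{1}{\sqrt2}\bigl(a_2(n-1,T-1)-a_1(n-1,T-1)\bigr)$. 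Introducing $\hat a_j(\theta,T)=\sum_n a_j(n,T)e^{-in\theta}$, a shift of the summation index turns these into $(\hat a_1,\hat a_2)^{\top}(\theta,T)=U(\theta)\,(\hat a_1,\hat a_2)^{\top}(\theta,T-1)$ with $U(\theta)=\tfrac{1}{\sqrt2}\left(\begin{smallmatrix}e^{i\theta}&e^{i\theta}\\-e^{-i\theta}&e^{-i\theta}\end{smallmatrix}\right)$. The initial data at $T=1$ is $a_1(n,1)=0$, $a_2(n,1)=\delta_{n,1}$, so $\hat a_1(\theta,1)=0$ and $\hat a_2(\theta,1)=e^{-i\theta}$, whence $\hat a_1(\theta,T)=\bigl(U(\theta)^{T-1}\bigr)_{12}\,e^{-i\theta}$. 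By Plancherel, $\sum_{x}a_1(x,t,m,\varepsilon)^2=\tfrac{1}{2\pi}\int_{-\pi}^{\pi}\bigl|\hat a_1(\theta,T)\bigr|^2\,d\theta$.

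Next I would diagonalize $U(\theta)$. One checks $\det U(\theta)=1$ and $\operatorname{tr}U(\theta)=\sqrt2\cos\theta$, so its eigenvalues are $e^{\pm i\omega}$ with $\cos\omega=\cos\theta/\sqrt2$. By the Cayley--Hamilton recursion $U^{k}=U_{k-1}(\cos\omega)\,U-U_{k-2}(\cos\omega)\,I$, where $U_k$ is the Chebyshev polynomial of the second kind, the off-diagonal entry is $\bigl(U(\theta)^{T-1}\bigr)_{12}=U_{T-2}(\cos\omega)\cdot\tfrac{e^{i\theta}}{\sqrt2}$. Hence $|\hat a_1(\theta,T)|^2=\tfrac12\,U_{T-2}(\cos\omega)^2=\tfrac12\,\dfrac{\sin^2((T-1)\omega)}{\sin^2\omega}$, and therefore $\sum_x a_1(x,t,m,\varepsilon)^2=\tfrac{1}{4\pi}\int_{-\pi}^{\pi}\dfrac{\sin^2((T-1)\omega)}{\sin^2\omega}\,d\theta$, the singularities at $\sin\omega=0$ being removable.

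The crux is to evaluate this integral in closed form. Using the elementary identity $\dfrac{\sin^2(n\omega)}{\sin^2\omega}=\sum_{k=0}^{n-1}U_{2k}(\cos\omega)$ (which follows from $\sum_{k=0}^{n-1}\sin((2k+1)\omega)=\sin^2(n\omega)/\sin\omega$), the problem reduces to computing the $\theta$-averages $c_k:=\tfrac{1}{2\pi}\int_{-\pi}^{\pi}U_{2k}(\cos\theta/\sqrt2)\,d\theta$. I would obtain all of these at once from the generating function $\sum_{k\ge0}U_k(\cos\theta/\sqrt2)\,z^k=(1-\sqrt2\,z\cos\theta+z^2)^{-1}$: averaging over $\theta$ via the standard integral $\tfrac{1}{2\pi}\int_{-\pi}^{\pi}(a-b\cos\theta)^{-1}d\theta=(a^2-b^2)^{-1/2}$ collapses the right-hand side to $(1+z^4)^{-1/2}=\sum_{j\ge0}\tfrac{(-1)^j}{4^j}\binom{2j}{j}z^{4j}$. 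Reading off coefficients gives $c_k=0$ for odd $k$ and $c_{2j}=\tfrac{(-1)^j}{4^j}\binom{2j}{j}$. Summing over $k=0,\dots,T-2$ (only even indices survive, and $\lfloor (T-2)/2\rfloor=\lfloor T/2\rfloor-1$) yields exactly $\sum_x a_1^2=\tfrac12\sum_{k=0}^{\lfloor t/2\varepsilon\rfloor-1}\tfrac{1}{(-4)^k}\binom{2k}{k}$. I expect this coefficient extraction --- recognizing that the $\theta$-average collapses to $(1+z^4)^{-1/2}$ --- to be the one genuinely non-routine step.

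Finally, for the asymptotics I would set $N=\lfloor t/2\varepsilon\rfloor$ and compare the partial sum to its limit. Since $\sum_{k\ge0}\binom{2k}{k}x^k=(1-4x)^{-1/2}$, at $x=-1/4$ the full series sums to $1/\sqrt2$, giving the main term $\tfrac12\cdot\tfrac{1}{\sqrt2}=\tfrac{1}{2\sqrt2}$. The tail $\sum_{k\ge N}\tfrac{1}{(-4)^k}\binom{2k}{k}$ is an alternating series whose terms $\binom{2k}{k}/4^k\sim(\pi k)^{-1/2}$ decrease monotonically to $0$ (the ratio of consecutive terms is $(2k+1)/(2k+2)<1$), so its absolute value is bounded by the first term $\binom{2N}{N}/4^N=\mathrm{O}(N^{-1/2})$. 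As $N\asymp t/\varepsilon$, this error is $\mathrm{O}(\sqrt{\varepsilon/t})$, completing the estimate.
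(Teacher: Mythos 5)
The paper does not actually prove this proposition: it is listed among the imported results in \S\ref{preliminaries} and attributed wholesale to \cite[Theorem~5 and Remark~1]{main}, so there is no in-paper argument to compare yours against. Judged on its own, your proof is correct and self-contained. I checked the key steps: the transfer matrix $U(\theta)$ with $\det U=1$, $\operatorname{tr}U=\sqrt2\cos\theta$ is right for the initial data $a_2(n,1)=\delta_{n,1}$; the Chebyshev formula for $(U^{T-1})_{12}$ gives $|\hat a_1|^2=\tfrac12\sin^2((T-1)\omega)/\sin^2\omega$; the telescoping identity $\sin^2(n\omega)/\sin^2\omega=\sum_{k=0}^{n-1}U_{2k}(\cos\omega)$ holds; and the generating-function average does collapse to $(1+z^4)^{-1/2}$, yielding $c_{2j}=(-1)^j4^{-j}\binom{2j}{j}$ and $c_k=0$ otherwise, with $\lfloor (T-2)/2\rfloor=\lfloor T/2\rfloor-1$ matching the stated upper limit (I verified the result against Table~\ref{a-table} for $T=2,3,4$). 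Two small points deserve one sentence each in a final write-up: the series $\sum_k\binom{2k}{k}(-1/4)^k$ sits exactly on the boundary of the disc of convergence of $(1-4x)^{-1/2}$, so identifying its sum with $1/\sqrt2$ requires Abel's theorem (convergence itself you already justified via the ratio $(2k+1)/(2k+2)<1$); and the interchange of $\sum_k z^k$ with $\int d\theta$ should be flagged as legitimate for small real $z>0$ by uniform convergence, after which coefficients are compared. Your aside about removable singularities at $\sin\omega=0$ is moot --- since $|\cos\omega|\leqslant 1/\sqrt2$ one has $|\sin\omega|\geqslant 1/\sqrt2$ on the whole integration range, so the integrand is never singular. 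Note also that your Fourier/transfer-matrix method is not specific to $m\varepsilon=1$ and would give the analogous closed form for general $m\varepsilon$, which is essentially the content of Proposition~\ref{p-right-prob-gen} proved in \cite{Bogdanov}; that is a genuine bonus of your route over simply citing \cite{main}.
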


Hereafter notation $f(t,m,\varepsilon) = \mathrm{O}(g(t,m,\varepsilon))$ means there is a constant $C$ (not depending on $t,m,\varepsilon$) such that for each $t,m,\varepsilon$ satisfying the assumptions of the proposition we have $|f(t,m,\varepsilon)| \leqslant C g(t,m,\varepsilon)$.

The following genealization of Proposition~\ref{p-right-prob} was conjectured by I.~Gaidai-Turlov, T.~Kovalev, A.~Lvov in 2019, and proved by I.~Bogdanov in 2020.

\begin{proposition}[{\cite[Theorem~2]{Bogdanov}}] \label{p-right-prob-gen}
If $0 \leqslant m \varepsilon \leqslant 1$, then $$\underset{t\in \z_+}{\lim_{t \to +\infty}}\sum_{x\in\z}a_{1}(x,t,m,\varepsilon)^2 = \frac{m \varepsilon}{2 \sqrt{1+m^2 \varepsilon^2}}.$$
\end{proposition}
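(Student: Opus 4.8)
The plan is to pass to the spatial Fourier transform, where the Dirac recurrence of Proposition~\ref{Dirac} becomes multiplication by a fixed unitary matrix, and then combine the Plancherel identity with the Riemann--Lebesgue lemma. Assume $m\varepsilon>0$; the case $m\varepsilon=0$ is trivial, since then only the turn-free path contributes, so $a_1\equiv 0$ and the sum vanishes, matching the right-hand side. For fixed $t\in\zp$ the sequence $x\mapsto a_j(x,t,m,\varepsilon)$ is supported on finitely many points, so I set
$$\hat a_j(p,t):=\sum_{x\in\z}a_j(x,t,m,\varepsilon)\,e^{-ipx},\qquad p\in[-\pi/\varepsilon,\pi/\varepsilon],\ j=1,2,$$
a finite trigonometric sum with no convergence issues. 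Rewriting Proposition~\ref{Dirac} in these variables collapses the two scalar recurrences into one vector recurrence $\hat a(p,t)=U(p)\,\hat a(p,t-\varepsilon)$, where $\hat a=(\hat a_1,\hat a_2)^{\mathsf T}$ and
$$U(p)=\frac{1}{\sqrt{1+m^2\varepsilon^2}}\begin{pmatrix} e^{ip\varepsilon} & m\varepsilon\,e^{ip\varepsilon}\\ -m\varepsilon\,e^{-ip\varepsilon} & e^{-ip\varepsilon}\end{pmatrix}.$$
A direct check shows $U(p)$ is unitary with $\det U(p)=1$. The initial datum is $\hat a(p,\varepsilon)=(0,e^{-ip\varepsilon})^{\mathsf T}$, because the only checker path of length one ends at $(\varepsilon,\varepsilon)$, where $a(\varepsilon,\varepsilon,m,\varepsilon)=i$.

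Next I diagonalize $U(p)$. Its eigenvalues are $e^{\pm i\omega\varepsilon}$ with $\omega=\omega(p)\in[0,\pi/\varepsilon]$ fixed by the dispersion relation $\cos(\omega\varepsilon)=\cos(p\varepsilon)/\sqrt{1+m^2\varepsilon^2}$; for $m\varepsilon>0$ one has $\sin(\omega\varepsilon)>0$ everywhere, so $\omega(p)$ is smooth. Iterating gives $\hat a(p,t)=U(p)^{\,t/\varepsilon-1}\hat a(p,\varepsilon)$, and applying the two spectral projections to $(0,e^{-ip\varepsilon})^{\mathsf T}$ one finds that the two eigenvalue contributions to the first coordinate are equal and opposite. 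This collapses the answer to the closed form
$$\hat a_1(p,t)=\frac{m\varepsilon}{\sqrt{1+m^2\varepsilon^2}}\cdot\frac{\sin\!\big((t/\varepsilon-1)\,\omega\varepsilon\big)}{\sin(\omega\varepsilon)}.$$
Since $a_1(x,t,m,\varepsilon)$ is real, Plancherel's identity on $\z$ yields $\sum_{x\in\z}a_1(x,t,m,\varepsilon)^2=\tfrac{\varepsilon}{2\pi}\int_{-\pi/\varepsilon}^{\pi/\varepsilon}|\hat a_1(p,t)|^2\,dp$. Using $\sin^2\theta=\tfrac12(1-\cos 2\theta)$ and $\sin^2(\omega\varepsilon)=(m^2\varepsilon^2+\sin^2(p\varepsilon))/(1+m^2\varepsilon^2)$, this becomes
$$\sum_{x\in\z}a_1(x,t,m,\varepsilon)^2=\frac{\varepsilon}{2\pi}\int_{-\pi/\varepsilon}^{\pi/\varepsilon}\frac{m^2\varepsilon^2}{2\,(m^2\varepsilon^2+\sin^2(p\varepsilon))}\Big(1-\cos\big(2(t/\varepsilon-1)\,\omega\varepsilon\big)\Big)\,dp.$$

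It remains to let $t\to+\infty$. The constant part of the integrand gives, after the substitution $\theta=p\varepsilon$ and the standard evaluation $\int_{-\pi}^{\pi}(m^2\varepsilon^2+\sin^2\theta)^{-1}\,d\theta=2\pi/(m\varepsilon\sqrt{1+m^2\varepsilon^2})$, exactly $\tfrac{m\varepsilon}{2\sqrt{1+m^2\varepsilon^2}}$, the claimed limit. The oscillatory part must tend to $0$: on each interval where $\omega(p)$ is monotone, namely $(-\pi/\varepsilon,0)$ and $(0,\pi/\varepsilon)$, I change the variable to $\phi=\omega\varepsilon$, turning that term into $\int g(\phi)\cos\!\big(2(t/\varepsilon-1)\phi\big)\,d\phi$, after which the Riemann--Lebesgue lemma applies. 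I expect this last step to be the only delicate point: $\omega'(p)$ vanishes at $p\varepsilon\in\{0,\pm\pi\}$, so the change of variables has an unbounded Jacobian there, and one must confirm $g\in L^1$ (which holds, since $\int|g|\,d\phi$ equals the finite $p$-integral of the bounded amplitude) or, equivalently, control small neighborhoods of these critical points by a stationary-phase bound of order $(t/\varepsilon)^{-1/2}$ before invoking Riemann--Lebesgue. Specializing to $m\varepsilon=1$ recovers Proposition~\ref{p-right-prob}, and the computation in fact goes through for every $m\varepsilon>0$.
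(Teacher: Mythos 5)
The paper offers no proof of this statement: it is imported verbatim from \cite[Theorem~2]{Bogdanov} (and the related limit also follows from \cite{weak_limit}), so there is no in-house argument to compare yours against. Your proposal is, as far as I can check, a correct and essentially complete proof. The Fourier conjugation of Proposition~\ref{Dirac} is right (I verified $U(p)$, its unitarity, the dispersion relation $\cos(\omega\varepsilon)=\cos(p\varepsilon)/\sqrt{1+m^2\varepsilon^2}$, and the closed form $\hat a_1(p,t)=\tfrac{m\varepsilon}{\sqrt{1+m^2\varepsilon^2}}\sin((t/\varepsilon-1)\omega\varepsilon)/\sin(\omega\varepsilon)$ against Table~\ref{a-table} for $t=2\varepsilon,3\varepsilon$); the identity $\sin^2(\omega\varepsilon)=(m^2\varepsilon^2+\sin^2(p\varepsilon))/(1+m^2\varepsilon^2)$ and the evaluation of the constant term via $\int_{-\pi}^{\pi}(a+\sin^2\theta)^{-1}d\theta=2\pi/\sqrt{a(a+1)}$ both check out and give exactly $\tfrac{m\varepsilon}{2\sqrt{1+m^2\varepsilon^2}}$. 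You also correctly isolate the one delicate point, the oscillatory term: since $\sin(\omega\varepsilon)\geqslant m\varepsilon/\sqrt{1+m^2\varepsilon^2}>0$, the substitution $\phi=\omega\varepsilon$ is a monotone $C^1$ change of variables on each of $(0,\pi/\varepsilon)$ and $(-\pi/\varepsilon,0)$, and the resulting density $g$ is nonnegative with $\int|g|\,d\phi$ equal to the finite $p$-integral of a bounded function, so $g\in L^1$ and Riemann--Lebesgue applies with no need for the stationary-phase fallback. Two remarks worth keeping: your argument shows the hypothesis $m\varepsilon\leqslant 1$ is not needed (the limit formula holds for all $m\varepsilon\geqslant 0$), and specializing $m\varepsilon=1$ it also recovers the limit in Proposition~\ref{p-right-prob}, so it would make both cited external results self-contained within the paper.
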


\section{Main result: the probability to find an electron vanishes nowhere inside the light cone}
\label{new results}

The goal of this section is to prove the following theorem.

\begin{theorem}
\label{non-zero}
    For each $m>0$ and a point $(x,t) \in \zz$ such that $(x + t)/\varepsilon$ is even and $t>|x|$ we have $P(x,t,m,\varepsilon) \neq 0$.
\end{theorem}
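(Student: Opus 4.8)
The plan is to fix $m>0$, abbreviate $\mu:=m\varepsilon>0$ and $u:=\mu^2$, and reduce the assertion $P\neq0$ to a statement about common roots of two explicit polynomials. Set $C:=(1+\mu^2)^{(1-t/\varepsilon)/2}>0$, $p:=(t+x-2\varepsilon)/2\varepsilon$, $q:=(t-x-2\varepsilon)/2\varepsilon$; the hypotheses $t>|x|$ and $(x+t)/\varepsilon\in2\Z$ make $p,q$ nonnegative integers with $\min(p,q)+1=(t-|x|)/2\varepsilon$. By Proposition~\ref{formula}, $a_1=C\mu\,f(u)$ and $a_2=-C\mu^2 g(u)$, where $f(u):=\sum_{r\geqslant0}(-1)^r\binom{p}{r}\binom{q}{r}u^r$ and $g(u):=\sum_{s\geqslant0}(-1)^s\binom{p}{s+1}\binom{q}{s}u^s$ have integer coefficients. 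Hence $P=C^2u\bigl(f(u)^2+u\,g(u)^2\bigr)$, and since $C^2u>0$, it suffices to show that $f$ and $g$ have no common positive root. If $\min(p,q)=0$ (the lattice points just inside the light cone), then $f\equiv1$ and $a_1\neq0$, so I may assume $p,q\geqslant1$.

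Next I would pass to Jacobi polynomials. The map $x_0=\tfrac{1-u}{1+u}$ is a decreasing bijection of $(0,\infty)$ onto $(-1,1)$, so a common positive root of $f,g$ is the same as a common zero $x_0\in(-1,1)$ of the corresponding polynomials in $x_0$. Writing $f$ and $g$ as Gauss sums, $f(u)={}_2F_1(-p,-q;1;-u)$ and $g(u)=p\,{}_2F_1(-(p-1),-q;2;-u)$, the standard hypergeometric form of the Jacobi polynomials $P_n^{(\alpha,\beta)}$ yields $f(u)=(1+u)^{\min(p,q)}P_{\min(p,q)}^{(0,|p-q|)}(x_0)$ (here the symmetry of $f$ in $p\leftrightarrow q$, i.e. Proposition~\ref{symmetry}(1), lets me place the smaller index on top) together with an analogous formula for $g$. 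I will use two classical facts: for $\alpha,\beta>-1$ the polynomial $P_n^{(\alpha,\beta)}$ has $n$ simple zeros, all lying in $(-1,1)$; and two consecutive orthogonal polynomials for one and the same weight never share a zero.

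The argument then splits according to the sign of $x$. When $x\leqslant0$ (so $p\leqslant q$) one has $f\leftrightarrow P_p^{(0,q-p)}$ and $g\leftrightarrow P_{p-1}^{(1,q-p+1)}$, and the differentiation formula $\tfrac{d}{dx}P_p^{(0,q-p)}=\tfrac{q+1}{2}P_{p-1}^{(1,q-p+1)}$ shows that $g$ corresponds to the derivative of the polynomial attached to $f$; a common zero would then be a multiple zero of $P_p^{(0,q-p)}$, contradicting simplicity. When $x>0$ (so $p>q$) put $\gamma:=p-q\geqslant1$; now $f\leftrightarrow P_q^{(0,\gamma)}$ and $g\leftrightarrow P_q^{(1,\gamma-1)}$ have the \emph{same} degree $q$. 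Assume $x^\ast$ is a common zero. The contiguous relations $P_q^{(1,\gamma-1)}-P_q^{(0,\gamma)}=P_{q-1}^{(1,\gamma)}$ and $(2q+\gamma+1)P_q^{(0,\gamma)}=(q+\gamma+1)P_q^{(1,\gamma)}-(q+\gamma)P_{q-1}^{(1,\gamma)}$ give first $P_{q-1}^{(1,\gamma)}(x^\ast)=0$ and then $P_q^{(1,\gamma)}(x^\ast)=0$; but $P_q^{(1,\gamma)}$ and $P_{q-1}^{(1,\gamma)}$ are consecutive orthogonal polynomials for the weight $(1-x)(1+x)^{\gamma}$ on $(-1,1)$, so they cannot share a zero — a contradiction. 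This exhausts all cases and proves $P\neq0$.

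I expect the main obstacle to be precisely the case $x>0$. There $f$ and $g$ correspond to Jacobi polynomials of equal degree whose parameters differ by $(+1,-1)$, so the clean Rolle-type argument available for $x\leqslant0$ breaks down, and the heart of the proof is to choose contiguous relations that collapse a hypothetical common zero onto two consecutive orthogonal polynomials of a single weight. The remaining ingredients — the reduction via Proposition~\ref{formula}, the hypergeometric-to-Jacobi translation, and the simplicity and interlacing of Jacobi zeros — are routine, and the boundary cases $\min(p,q)=0$ are immediate.
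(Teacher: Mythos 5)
Your argument is correct, but it is a genuinely different proof from the one in the paper. The paper argues by minimal counterexample: taking a hypothetical zero $(x_0,t_0)$ of $P$ inside the light cone with minimal $t_0$, it first rules out $x_0=-t_0+2\varepsilon$ via Example~\ref{edge}, then uses the symmetry identities (Proposition~\ref{symmetry}) to transport the vanishing of $a_1$ and $a_2$ to the points $(-x_0,t_0)$ and $(2\varepsilon-x_0,t_0)$, and finally runs the Dirac equation backwards in time (Proposition~\ref{formulas-down}) to produce a zero of $P$ at $(-x_0+\varepsilon,t_0-\varepsilon)$, contradicting minimality --- about ten lines, entirely within the lattice recurrences. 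You instead start from the explicit binomial formulae of Proposition~\ref{formula}, reduce the claim to the statement that the polynomials $f$ and $g$ have no common root in $u>0$, and settle this by identifying $f$ and $g$ (after $x_0=\frac{1-u}{1+u}$) with Jacobi polynomials: a Rolle-type argument via the derivative formula when $x\leqslant 0$, and two contiguous relations collapsing a putative common zero onto consecutive orthogonal polynomials for the weight $(1-x)(1+x)^{\gamma}$ when $x>0$. I checked the hypergeometric identifications, the derivative formula $\frac{d}{dx}P_p^{(0,q-p)}=\frac{q+1}{2}P_{p-1}^{(1,q-p+1)}$, and both contiguous relations; they are the standard ones and the case analysis (including $\min(p,q)=0$) is complete. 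What your route costs in machinery it repays in information: it proves the stronger statement that $a_1$ and $a_2$ never vanish simultaneously as functions of the mass, locates all zeros of $a_1(x,t,m,\varepsilon)$ in $m$ as (simple, real) Jacobi zeros, and ties the model to the classical special-function literature; the paper's induction is shorter, self-contained, and needs no explicit formulae at all.
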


\begin{remark}
Thus $P(x,t,m,\varepsilon) \neq 0$ if and only if $(x + t)/\varepsilon$ is even and either $t>|x|$ or $t=x>0$. In other words, $P(x,t,m,\varepsilon) \neq 0$ if and only if there exists at least one checker path from $(0,0)$ to $(x,t)$.
\end{remark}

\begin{remark}
Note that the assertion of the theorem is not obvious at all. It follows neither from Definition~\ref{def-mass} nor from the explicit formulae from Proposition~\ref{formula}. Figure~\ref{graph} shows that for some pairs $(x,t)$ the probability can be a very small number. But Proposition~\ref{symmetry} and a simple trick helps us to prove the theorem.
\end{remark}

\begin{proof}[Proof of Theorem~\ref{non-zero}]
Denote $M = \{(x,t) \in \zz \: : \: (x + t)/\varepsilon$ is even, $t>|x|, P(x,t,m,\varepsilon) = 0\}$. If $M=\emptyset$, then there is nothing to prove. Assume that $M \neq \emptyset$. Among the points of $M$, select the one with the minimal $t-$coordinate (if there are several such points, select any of them). Denote by $(x_0,t_0)$ the selected point. By Example~\ref{edge}, for all $t\in\zp$ we have $P(-t+2\varepsilon,t) = m^2 \varepsilon^2 (1+m^2\varepsilon^2)^{(1-t/\varepsilon)} \neq 0$, and thus $x_0 \neq -t_0+2 \varepsilon$. Subsequently, by Proposition~\ref{symmetry} it follows that $a_1(-x_0,t_0,m,\varepsilon) = a_1(x_0,t_0,m,\varepsilon) = 0$ and $a_2(2\varepsilon-x_0,t_0,m,\varepsilon) = (t_0-x_0) \frac{a_2(x_0,t_0,m,\varepsilon)}{t_0+x_0-2\varepsilon} = 0$. See Figure~\ref{zeros}. By Proposition~\ref{formulas-down} we get 
\begin{align*}
a_1(-x_0+\varepsilon,t_0-\varepsilon, m, \varepsilon) &= \frac{a_1(-x_0,t_0,m,\varepsilon) - m \varepsilon \, a_2(-x_0+2\varepsilon,t_0,m,\varepsilon)}{\sqrt{1+m^2\varepsilon^2}} = 0 \text{ and}\\
a_2(-x_0+\varepsilon,t_0-\varepsilon, m, \varepsilon) &= \frac{a_2(-x_0+2\varepsilon,t_0,m,\varepsilon) + m \varepsilon \, a_1(-x_0,t_0,m,\varepsilon)}{\sqrt{1+m^2\varepsilon^2}} = 0.
\end{align*}
Thus $P(-x_0+\varepsilon, t_0-\varepsilon,m,\varepsilon) = 0$. This contradicts to the minimality of $t_0$, because $(x_0-\varepsilon+t_0-\varepsilon)/\varepsilon$ is even and $t_0-\varepsilon > |x_0 - \varepsilon|$ by the condition $x_0 \neq -t_0 + 2\varepsilon$ above.
\end{proof}

\begin{figure}[h]
	\centering
	\begin{minipage}{0.38\textwidth}
		\centering
		\includegraphics[width=1\linewidth]{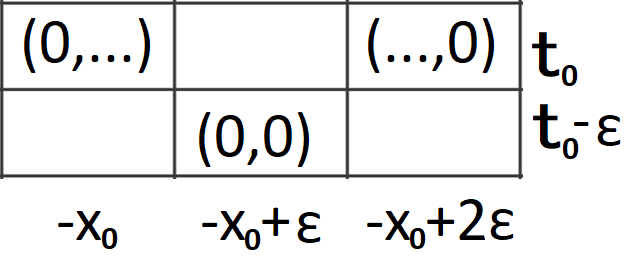}
		\caption{The pair in a cell $(x,t)$ is $(a_1(x,t), a_2(x,t))$}
		\label{zeros}
	\end{minipage}
\end{figure}

\section{On the electron velocity}
\label{velocity_section}

In this section we prove Proposition~\ref{velocity} and Theorem~\ref{limit_velocity} stated below. The former  answers a question by D.~Treschev. For the statements, we need the following definitions.

\subsection{The expectation of the average electron velocity equals the time-average of the expectation of the instantaneous electron velocity}
\label{velocity_electron1}

It goes almost without saying to consider the electron as being in one of the two states depending on the last-move direction: right-moving or left-moving (or just `right' or `left' for brevity). Actually, these two states are exactly (1+1)-dimensional analogue of chirality states for a spin 1/2 particle. See, for example, a discussion on this topic in \cite{main}. Let us give a few new definitions.

\begin{definition}[{\cite{main}}]
The \emph{probability $P_{t,m,\varepsilon}(x,+)$ to find a right electron at the lattice point $(x,t)\in\zz$, if the right electron was emitted from the point $(0,0)$}, is the length square of the vector ${a}(x,t,m,\varepsilon,+):=(1+m^2\varepsilon^2)^{(1-t/\varepsilon)/2}\,i\,\sum_s (-im\varepsilon)^{\mathrm{turns}(s)}$, where the sum is over only those checker paths from $(0,0)$ to $(x,t)$ that both start and finish with an upwards-right move, and $\mathrm{turns}(s)$ is the number of turns in $s$. \\
The \emph{probability $P_{t,m,\varepsilon}(x,-)$ to find a left electron} is defined analogously, only the sum is taken over checker paths that start with an upwards-right move but finish with an upwards-left move.
\end{definition}

\begin{remark}[{\cite{main}}]
Clearly, the above probabilities equal $a_2(x,t,m,\varepsilon)^2$ and $a_1(x,t,m,\varepsilon)^2$ respectively, because the last move is directed upwards-right if and only if the number of turns is even. By Proposition~\ref{prob}, $P_{t,m,\varepsilon}(x,\pm)$ is indeed a probability measure on the set $\{(x,\pm) : x\in\z\}$.
\end{remark}


\begin{definition}
\label{vel}
\noindent The \emph{average electron velocity} is the random variable on the set $\{(x,\pm) : x \in \z\}$ given by
$$v_t(x,\pm) = v_t(x)=x/t.$$
\noindent The \emph{instantaneous electron velocity} is the random variable on the same set given by
$$u_t(x,\sigma) = \begin{cases}
+1 \text{, if }\sigma = +,\\
-1 \text{, if }\sigma = -.
\end{cases}$$
\end{definition}

\begin{remark}
The random variables in the above definition depend on $t, m, \varepsilon$. To emphasize that, we use the notation $\ee_{t,m,\varepsilon}$ for their expectation.
\end{remark}

\begin{proposition} \label{velocity} 
The expectation of the average electron velocity equals the time-average of the expectation of the instantaneous electron velocity, i.e., for each $T\in\zp$ we have 
$$\ee_{T,m,\varepsilon}\left(v_T\right) = \frac{\varepsilon}{T}\left[\sum\limits_{t=\varepsilon,\ldots,T}\ee_{t,m,\varepsilon}\left(u_t\right)\right].$$
\end{proposition}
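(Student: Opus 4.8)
The plan is to unwind both sides into explicit sums over the amplitudes $a_1,a_2$ and then prove a one-step recurrence in time, concluding by telescoping. Throughout I suppress the arguments $m,\varepsilon$. Using the definitions of the two velocities together with $P_{t,m,\varepsilon}(x,+)=a_2(x,t)^2$ and $P_{t,m,\varepsilon}(x,-)=a_1(x,t)^2$, the left-hand side becomes
$$\ee_{T,m,\varepsilon}(v_T)=\frac{1}{T}\sum_{x\in\z} x\left(a_1(x,T)^2+a_2(x,T)^2\right)=:\frac{1}{T}X(T),$$
while each summand on the right is
$$\ee_{t,m,\varepsilon}(u_t)=\sum_{x\in\z}\left(a_2(x,t)^2-a_1(x,t)^2\right)=:V(t).$$
All these sums are finite, since $a(x,t)=0$ whenever $|x|>t$. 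Multiplying the claimed identity by $T$, it suffices to prove $X(T)=\varepsilon\sum_{t=\varepsilon,\ldots,T}V(t)$.

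The heart of the argument is the one-step identity $X(t)-X(t-\varepsilon)=\varepsilon V(t)$, valid for every $t\geqslant 2\varepsilon$. To prove it I apply Proposition~\ref{formulas-down}, which expresses $a_1(x,t-\varepsilon)$ and $a_2(x,t-\varepsilon)$ linearly in $a_1(x-\varepsilon,t)$ and $a_2(x+\varepsilon,t)$. Squaring and adding, the two cross terms (both proportional to $m\varepsilon\,a_1(x-\varepsilon,t)\,a_2(x+\varepsilon,t)$) cancel and the factor $1+m^2\varepsilon^2$ divides out exactly, leaving the clean identity
$$a_1(x,t-\varepsilon)^2+a_2(x,t-\varepsilon)^2=a_1(x-\varepsilon,t)^2+a_2(x+\varepsilon,t)^2.$$

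Multiplying this by $x$, summing over $x\in\z$, and reindexing each resulting sum (replacing $x$ by $x+\varepsilon$ in the first and by $x-\varepsilon$ in the second) turns $X(t-\varepsilon)$ into $X(t)+\varepsilon\sum_x a_1(x,t)^2-\varepsilon\sum_x a_2(x,t)^2=X(t)-\varepsilon V(t)$, which is exactly the one-step identity. Finally I telescope: summing $X(t)-X(t-\varepsilon)=\varepsilon V(t)$ over $t=2\varepsilon,\ldots,T$ gives $X(T)-X(\varepsilon)=\varepsilon\sum_{t=2\varepsilon,\ldots,T}V(t)$, and for the base case $t=\varepsilon$ Example~\ref{edge} (or Table~\ref{a-table}) yields $a_1(\varepsilon,\varepsilon)=0$, $a_2(\varepsilon,\varepsilon)=1$, so $X(\varepsilon)=\varepsilon=\varepsilon V(\varepsilon)$; adding this recovers $X(T)=\varepsilon\sum_{t=\varepsilon,\ldots,T}V(t)$. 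I expect no analytic obstacle, because every sum is finitely supported; the only delicate point is spotting the cross-term cancellation in Proposition~\ref{formulas-down} and getting the $\pm\varepsilon$ index shifts and their signs right.
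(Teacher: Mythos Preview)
Your proof is correct and is actually cleaner than the paper's own argument. The paper uses Proposition~\ref{Dirac} (the forward Dirac recursion) to expand both $\sum_x(a_2^2-a_1^2)$ and $\sum_x x(a_1^2+a_2^2)$ separately; because $a_1(x,t)$ and $a_2(x,t)$ pull back to \emph{different} points $x\pm\varepsilon$ at time $t-\varepsilon$, the cross terms $\sum_x a_1(x,t-\varepsilon)a_2(x,t-\varepsilon)$ do not cancel and must be carried along until the two computations are compared termwise. You instead use Proposition~\ref{formulas-down} (the inverse recursion), where $a_1(x,t-\varepsilon)$ and $a_2(x,t-\varepsilon)$ are a rotation of the \emph{same} pair $\bigl(a_1(x-\varepsilon,t),a_2(x+\varepsilon,t)\bigr)$; this yields the pointwise identity $a_1(x,t-\varepsilon)^2+a_2(x,t-\varepsilon)^2=a_1(x-\varepsilon,t)^2+a_2(x+\varepsilon,t)^2$ (the same ``probability current'' relation underlying Proposition~\ref{prob} and equation~\eqref{currents}), from which the one-step increment $X(t)-X(t-\varepsilon)=\varepsilon V(t)$ and the telescoping follow in a couple of lines. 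The payoff of your route is that no cross terms ever appear and the argument is a direct induction rather than a matching of two parallel expansions.
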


\begin{remark}
This is not at all automatic because the model does not give any probability distribution on the set of checker paths. Moreover, notice that in the expression $\sum\limits_{t=\varepsilon,\ldots,T}\ee_{t,m,\varepsilon}\left(u_t\right)$ we take expectations in $T/\varepsilon$ different probability spaces.
\end{remark}

\begin{proof}[Proof of Proposition~\ref{velocity}]
If $T=\varepsilon$, then there is nothing to prove. Thus we assume that $T\geqslant 2 \varepsilon$.

By Proposition~\ref{Dirac} for each $t>\varepsilon$ we have:
\begin{align*}
&\sum\limits_{x\in\z}a_2(x,t,m,\varepsilon)^2-\sum\limits_{x\in\z}a_1(x,t,m, \varepsilon)^2 = \\
& = \frac{1}{1 + m^2 \varepsilon^2} \bigg[ \sum\limits_{x\in\z}a_2(x,t-\varepsilon,m,\varepsilon)^2+ m^2 \varepsilon^2 \sum\limits_{x\in\z}a_1(x,t-\varepsilon,m, \varepsilon)^2 - \\
& - 2 m \varepsilon \sum\limits_{x\in\z} a_1(x,t-\varepsilon,m, \varepsilon) a_2(x,t-\varepsilon,m, \varepsilon) \bigg] - \\
& - \frac{1}{1 + m^2 \varepsilon^2} \bigg[ \sum\limits_{x\in\z}a_1(x,t-\varepsilon,m,\varepsilon)^2+ m^2 \varepsilon^2 \sum\limits_{x\in\z}a_2(x,t-\varepsilon,m, \varepsilon)^2 + \\
& + 2 m \varepsilon \sum\limits_{x\in\z} a_1(x,t-\varepsilon,m, \varepsilon) a_2(x,t-\varepsilon,m, \varepsilon) \bigg] = \\
& = \frac{1 - m^2 \varepsilon^2}{1 + m^2 \varepsilon^2} \left(\sum\limits_{x\in\z}a_2(x,t-\varepsilon,m,\varepsilon)^2 - \sum\limits_{x\in\z}a_1(x,t-\varepsilon,m,\varepsilon)^2\right) - \\
&- \frac{4 m \varepsilon}{1 + m^2 \varepsilon^2} \sum\limits_{x\in\z} a_1(x,t-\varepsilon,m, \varepsilon) a_2(x,t-\varepsilon,m, \varepsilon).
\end{align*}

This implies that
\begin{align}
&\varepsilon \sum\limits_{t=\varepsilon,\ldots,T}\ee_{t,m,\varepsilon}\left(u_t\right) = 
\varepsilon \sum\limits_{t=\varepsilon,\ldots,T}\left(\sum\limits_{x\in\z}a_2(x,t,m, \varepsilon)^2-\sum\limits_{x\in\z}a_1(x,t,m, \varepsilon)^2\right) = \nonumber \\
& = \varepsilon + \varepsilon \frac{1 - m^2 \varepsilon^2}{1 + m^2 \varepsilon^2} \sum\limits_{t=2 \varepsilon,\ldots,T}\left(\sum\limits_{x\in\z}a_2(x,t-\varepsilon,m,\varepsilon)^2 - \sum\limits_{x\in\z}a_1(x,t-\varepsilon,m,\varepsilon)^2\right) - \nonumber \\
&- \varepsilon \frac{4 m \varepsilon}{1 + m^2 \varepsilon^2} \sum\limits_{t=2 \varepsilon,\ldots,T}\sum\limits_{x\in\z} a_1(x,t-\varepsilon,m, \varepsilon) a_2(x,t-\varepsilon,m, \varepsilon), \label{ur1}
\end{align}
where we use the obvious equality $\varepsilon \left( \sum\limits_{x\in\z}a_2(x,\varepsilon,m,\varepsilon)^2 - \sum\limits_{x\in\z} a_1(x,\varepsilon,m,\varepsilon)^2 \right) = \varepsilon$. On the other hand, by Proposition~\ref{Dirac} we have:
\begin{align}
T \cdot \ee_{T,m,\varepsilon} \left(v_T\right) &= \sum\limits_{x\in\z}x a_1(x,T,m, \varepsilon)^2 + \sum\limits_{x\in\z} x a_2(x,T,m, \varepsilon)^2 = \nonumber \\
& =  \sum\limits_{x\in\z}(x-\varepsilon) a_1(x-\varepsilon,T,m, \varepsilon)^2 + \sum\limits_{x\in\z} (x+\varepsilon) a_2(x+\varepsilon,T,m, \varepsilon)^2 = \nonumber \\
& = \frac{1}{1+m^2\varepsilon^2} \sum\limits_{x\in\z}\bigg[(x-\varepsilon) a_1(x,T-\varepsilon,m, \varepsilon)^2 + (x-\varepsilon) m^2 \varepsilon^2 \, a_2(x,T-\varepsilon,m, \varepsilon)^2 + \nonumber \\
& + 2(x-\varepsilon) m \varepsilon \, a_1(x,T-\varepsilon,m, \varepsilon) a_2(x,T-\varepsilon,m, \varepsilon) \bigg] + \nonumber \\
& + \frac{1}{1+m^2\varepsilon^2} \sum\limits_{x\in\z}\bigg[(x+\varepsilon) a_2(x,T-\varepsilon,m, \varepsilon)^2 + (x+\varepsilon) m^2 \varepsilon^2 \, a_1(x,T-\varepsilon,m, \varepsilon)^2 - \nonumber \\
& - 2(x+\varepsilon) m \varepsilon \, a_1(x,T-\varepsilon,m, \varepsilon) a_2(x,T-\varepsilon,m, \varepsilon)\bigg] = \nonumber \\
&= \sum\limits_{x\in\z}x a_1(x,T-\varepsilon,m, \varepsilon)^2 + \sum\limits_{x\in\z} x a_2(x,T-\varepsilon,m, \varepsilon)^2 + \nonumber \\
& + \varepsilon \frac{1 - m^2 \varepsilon^2}{1 + m^2 \varepsilon^2} \left(\sum\limits_{x\in\z}a_2(x,T-\varepsilon,m,\varepsilon)^2 - \sum\limits_{x\in\z}a_1(x,T-\varepsilon,m,\varepsilon)^2\right) -  \nonumber \\
&- \frac{4 m \varepsilon^2}{1+m^2\varepsilon^2}  \sum\limits_{x\in\z} a_1(x,T-\varepsilon,m, \varepsilon) a_2(x,T-\varepsilon,m, \varepsilon) = \nonumber \\
&= \varepsilon + \varepsilon \frac{1 - m^2 \varepsilon^2}{1 + m^2 \varepsilon^2} \sum\limits_{t=2 \varepsilon,\ldots,T}\left(\sum\limits_{x\in\z}a_2(x,t-\varepsilon,m,\varepsilon)^2 - \sum\limits_{x\in\z}a_1(x,t-\varepsilon,m,\varepsilon)^2\right) - \nonumber \\
& - \frac{4 m \varepsilon^2}{1+m^2\varepsilon^2} \sum\limits_{t=2\varepsilon,\ldots,T}\sum\limits_{x\in\z} a_1(x,t-\varepsilon,m, \varepsilon) a_2(x,t-\varepsilon,m, \varepsilon), \label{ur2}
\end{align} 
where the latter equality is obtained by repeating the same transformation $T-2\varepsilon$ times.

Comparing \eqref{ur1} and \eqref{ur2}, we get the required result.
\end{proof}

\subsection{Analogue of Proposition~\ref{velocity} for classical random walks}

To show that the statement of Proposition~\ref{velocity} is natural, let us give its analogue for classical random walks. Let us consider a flea that makes a random walk on the integer number line starting from $0$. If the flea is situated in the number $N$ at the time $t_0$, then at the time $t_0+1$ it is situated in the number $N+1$ with the probability $p$ or in the number $N-1$ with the probability $q = 1-p$. If we denote the probability to find the flea at $x\in\mathbb{N}$ at the time $t$ by $\widehat P(x,t)$, then $\widehat P(x,t) = p \widehat P(x-1,t-1) + q \widehat P(x+1,t-1)$.

\begin{definition} 
For a point $(x,t) \in \mathbb{Z}^2 \: (t \geqslant 0)$, we define the \emph{probability $\widehat P(x,t)$ to find the flea at the lattice point $(x,t)$, if the flea makes a random walk} by induction on $t$: \\
$\widehat P(x,0) = 
\begin{cases}
1, &\text{if }x=0,\\
0, &\text{if }x \neq 0.
\end{cases}$ \\
$\widehat P(x,t) = p \widehat P(x-1,t-1) + q \widehat P(x+1,t-1)$ for $t\geqslant1$.
\end{definition}

\begin{remark}
As in Definition~\ref{def-mass}, $x$ is interpreted as position and $t$ is interpreted as time. 
\end{remark}

\begin{remark}
It is easy to prove that for fixed $t$ we have $\sum_{x\in\mathbb{Z}} \widehat P(x,t) = 1$.
\end{remark}

The average flea velocity $\widehat v_t$ and the instantaneous flea velocity $\widehat u_t$ are defined literally as in Definition~\ref{vel}. The random variables $\widehat v_t$ and $\widehat u_t$ depend on $t\in\Z_+$; we write $\ee_{t}$ for their expectations.

 The following easy well-known proposition is an analogue of Proposition~\ref{velocity}.

\begin{proposition}[An analogue of Proposition~\ref{velocity} for classical random walks]
The expectation of the average flea velocity equals the time-average of the expectation of the instantaneous flea velocity, i.e., for each $T\in\Z_+$ we have 
$$\ee_{T}\left(\widehat v_T\right) = \frac{1}{T}\left[\sum\limits_{t=1}^T \ee_{t}\left(\widehat u_t\right)\right].$$
\end{proposition}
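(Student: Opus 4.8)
The plan is to evaluate the two sides of the claimed identity separately and check that each equals the per-step drift $p-q$. Unlike the quantum situation of Proposition~\ref{velocity}, here the two sides decouple completely, because for the classical walk the expectation of the instantaneous velocity turns out to be constant in time, so no telescoping of mixed terms is needed.

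First I would handle the left-hand side. Since $\ee_T(\widehat v_T) = \frac{1}{T}\sum_{x\in\Z} x\,\widehat P(x,T)$, it suffices to compute the expected position $\sum_{x\in\Z} x\,\widehat P(x,T)$. Substituting the recurrence $\widehat P(x,T) = p\,\widehat P(x-1,T-1) + q\,\widehat P(x+1,T-1)$ and shifting the summation index in each term, I would obtain
$$\sum_{x\in\Z} x\,\widehat P(x,T) = \sum_{x\in\Z} x\,\widehat P(x,T-1) + (p-q)\sum_{x\in\Z}\widehat P(x,T-1),$$
where the conservation identity $\sum_{x}\widehat P(x,t) = 1$ turns the second sum into $p-q$. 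Iterating this from the base case $\sum_{x} x\,\widehat P(x,0) = 0$ (equivalently, arguing by induction on $T$) gives $\sum_{x} x\,\widehat P(x,T) = T(p-q)$, hence $\ee_T(\widehat v_T) = p - q$.

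Next I would treat the right-hand side. The instantaneous velocity $\widehat u_t$ lives on $\{(x,\pm): x\in\Z\}$, so I first need the refinement of $\widehat P$ by the direction of the last step: the probability of being at $x$ at time $t$ having just moved right is $p\,\widehat P(x-1,t-1)$, and having just moved left is $q\,\widehat P(x+1,t-1)$ (these two sum to $\widehat P(x,t)$, consistently with the recurrence). Then
$$\ee_t(\widehat u_t) = \sum_{x\in\Z}\bigl( p\,\widehat P(x-1,t-1) - q\,\widehat P(x+1,t-1)\bigr) = p - q$$
for every $t\in\Z_+$, again using $\sum_{x}\widehat P(x,t-1)=1$. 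Consequently the time average $\frac{1}{T}\sum_{t=1}^{T} \ee_t(\widehat u_t)$ equals $p-q$ as well, and comparing it with the left-hand side finishes the proof.

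The computation carries no genuine obstacle; the only point requiring care is making the probability measure underlying $\widehat u_t$ explicit, since the text defines $\widehat u_t$ only ``literally as in Definition~\ref{vel}'' without spelling out the induced distribution on $\{(x,\pm)\}$. Once that is fixed, the essential structural difference from Proposition~\ref{velocity} is that $\ee_t(\widehat u_t)$ is independent of $t$, so both sides collapse to the same constant drift $p-q$ without the cancellation of cross-terms $a_1 a_2$ that drives the quantum argument.
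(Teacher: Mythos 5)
Your proposal is correct and follows essentially the same route as the paper: both arguments reduce each side to the constant drift $p-q$, using the recurrence for $\widehat P$ together with index shifts and the normalization $\sum_x \widehat P(x,t)=1$ (the paper phrases the left-hand-side computation as an induction on $T$ for $\ee_T(\widehat v_T)$ rather than iterating the recurrence for the expected position, which is the same calculation). Your extra care in spelling out the distribution underlying $\widehat u_t$ is a reasonable addition; the paper simply asserts $\ee_T(\widehat u_T)=p-q$ as clear.
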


\begin{proof}
Clearly, for each $T\in\Z_+$ we have $\ee_T(\widehat u_T) = p-q$. Let us prove by induction on $T\in\Z_+$ that for each $T$ we have $\ee_T (\widehat v_T) = p-q$. The base is obvious. To perform the induction step, suppose that for $T\in\Z_+$ we have $\ee_T (\widehat v_T) = p-q$. Then
\begin{align*}
\ee_{T+1} (\widehat v_{T+1}) &= \frac{1}{T+1} \sum\limits_{x\in\mathbb{Z}} x \widehat P(x,T+1) = \frac{1}{T+1} \sum\limits_{x\in\mathbb{Z}} x p \widehat P(x-1,T) + \frac{1}{T+1} \sum\limits_{x\in\mathbb{Z}} x q \widehat P(x+1,T) \\
& = \frac{1}{T+1} \sum\limits_{x\in\mathbb{Z}} (x+1) p \widehat P(x,T) + \frac{1}{T+1} \sum\limits_{x\in\mathbb{Z}} (x-1) q \widehat P(x,T) =  \\
& \frac{1}{T+1} \sum\limits_{x\in\mathbb{Z}} \left( x p \widehat P(x,T) + x q \widehat P(x,T) \right)+  \frac{1}{T+1} \sum\limits_{x\in\mathbb{Z}} (p-q) \widehat P(x,T) = \\
& \overset{(a)}{=} \frac{1}{T+1}\ee_{T} (\widehat v_{T}) + \frac{p-q}{T+1} \overset{(b)}{=} \frac{T (p-q)}{T+1} + \frac{p-q}{T+1} = p-q,
\end{align*}
where (a) follows from the identity $\sum_{x\in\mathbb{Z}} \widehat P(x,T) = 1$ for fixed $T$, and (b) follows from the inductive hypothesis. Thus $\ee_T (\widehat v_T) = p-q$ for all $T\in\Z_+$.
\end{proof}

\subsection{The limit value of the average electron velocity}
The following theorem gives us the limit value of the average electron velocity when time tends to infinity. This theorem and Proposition~\ref{limit_velocity_with_O} are corollaries of a more general result \cite[(18)]{weak_limit}, but we present short elementary proofs.

\begin{theorem} \label{limit_velocity}
If $0 \leqslant m \varepsilon \leqslant 1$, then we have
$$\underset{T\in\z_+}{\lim_{T \to +\infty}}\ee_{T,m,\varepsilon}\left(v_T\right) = 1 - \frac{m \varepsilon}{\sqrt{1+m^2 \varepsilon^2}}.$$
\end{theorem}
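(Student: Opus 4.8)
The plan is to combine Proposition~\ref{velocity}, which expresses the expected average velocity as a time-average of the expected instantaneous velocities, with Proposition~\ref{p-right-prob-gen}, which supplies the limit of the relevant sum, and then to invoke the Cesàro mean theorem.

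First I would put the expectation of the instantaneous velocity into closed form. Since the right-moving state at $(x,t)$ has probability $a_2(x,t,m,\varepsilon)^2$ and the left-moving state has probability $a_1(x,t,m,\varepsilon)^2$, and since $u_t$ takes the value $+1$ on right states and $-1$ on left states, we have
$$\ee_{t,m,\varepsilon}(u_t) = \sum_{x\in\z} a_2(x,t,m,\varepsilon)^2 - \sum_{x\in\z} a_1(x,t,m,\varepsilon)^2.$$
Applying the probability conservation law (Proposition~\ref{prob}), namely $\sum_{x\in\z}\left(a_1(x,t,m,\varepsilon)^2 + a_2(x,t,m,\varepsilon)^2\right) = 1$, this simplifies to
$$\ee_{t,m,\varepsilon}(u_t) = 1 - 2\sum_{x\in\z} a_1(x,t,m,\varepsilon)^2.$$

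Next, for $0 \leqslant m\varepsilon \leqslant 1$, Proposition~\ref{p-right-prob-gen} gives $\sum_{x\in\z} a_1(x,t,m,\varepsilon)^2 \to \frac{m\varepsilon}{2\sqrt{1+m^2\varepsilon^2}}$ as $t \to +\infty$ through $\z_+$. Hence $\ee_{t,m,\varepsilon}(u_t) \to L$, where $L := 1 - \frac{m\varepsilon}{\sqrt{1+m^2\varepsilon^2}}$. Finally I would use Proposition~\ref{velocity}, which states $\ee_{T,m,\varepsilon}(v_T) = \frac{\varepsilon}{T}\sum_{t=\varepsilon,\ldots,T}\ee_{t,m,\varepsilon}(u_t)$. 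The sum ranges over the $T/\varepsilon$ values $t = \varepsilon, 2\varepsilon, \ldots, T$, and the prefactor $\varepsilon/T$ is precisely the reciprocal of the number of terms, so the right-hand side is exactly the arithmetic (Cesàro) mean of the sequence $\bigl(\ee_{t,m,\varepsilon}(u_t)\bigr)$. By the Cesàro mean theorem, a convergent sequence and its sequence of arithmetic means have the same limit, so $\ee_{T,m,\varepsilon}(v_T) \to L$, which is the assertion.

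The whole argument is short bookkeeping once the three quoted propositions are in hand; the only point requiring a moment of care is justifying the Cesàro step rigorously (that the averages inherit the limit of the sequence). The genuinely nontrivial analytic content — the convergence $\ee_{t,m,\varepsilon}(u_t) \to L$ — is entirely delegated to Proposition~\ref{p-right-prob-gen}, so no new hard estimate is needed here. As a sanity check, the case $m=0$ gives $L=1$, matching the fact that a massless electron never turns and moves at the speed of light.
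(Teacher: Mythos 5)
Your proposal is correct and follows essentially the same route as the paper: both reduce $\ee_{T,m,\varepsilon}(v_T)$ via Propositions~\ref{velocity} and \ref{prob} to the time-average of $1-2\sum_{x\in\z}a_1(x,t,m,\varepsilon)^2$ and then apply Proposition~\ref{p-right-prob-gen}. The only cosmetic difference is that the paper proves the Ces\`aro step by hand (the explicit $\Delta(T,m,\varepsilon)\to 0$ estimate with the split at $t_0$), whereas you cite the Ces\`aro mean theorem by name.
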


\begin{proof}
By Propositions~\ref{velocity} and \ref{prob} we have
\begin{align*}
\ee_{T,m,\varepsilon}\left(v_T\right) &= \frac{\varepsilon}{T}\left(\sum\limits_{t=\varepsilon,\ldots,T}\ee_{t,m,\varepsilon}\left(u_t\right)\right) = \frac{\varepsilon}{T} \sum\limits_{t=\varepsilon,\ldots,T}\left(\sum\limits_{x\in\z}a_2(x,t,m, \varepsilon)^2-\sum\limits_{x\in\z}a_1(x,t,m, \varepsilon)^2\right) = \\
& = \frac{\varepsilon}{T} \sum\limits_{t=\varepsilon,\ldots,T}\left(1 - 2 \sum\limits_{x\in\z}a_1(x,t,m, \varepsilon)^2\right) = \frac{\varepsilon}{T} \frac{T}{\varepsilon} -\frac{2 \varepsilon}{T} \sum\limits_{t=\varepsilon,\ldots,T}\sum\limits_{x\in\z}a_1(x,t,m, \varepsilon)^2 = \\
& = 1 - \frac{m \varepsilon}{\sqrt{1+m^2 \varepsilon^2}} - \frac{2 \varepsilon}{T} \sum\limits_{t=\varepsilon,\ldots,T}\left(\sum\limits_{x\in\z}a_1(x,t,m, \varepsilon)^2 - \frac{m \varepsilon}{2 \sqrt{1+m^2 \varepsilon^2}} \right) =: 1 - \frac{m \varepsilon}{\sqrt{1+m^2 \varepsilon^2}} - \Delta(T,m,\varepsilon).
\end{align*}

Thus it remains to prove that $\Delta(T,m,\varepsilon) \to 0$ as $T \to +\infty$.

Take $\delta>0$. By Proposition~\ref{p-right-prob-gen} there exists $t_0 \in \zp$ such that for any $t>t_0$ we have
\begin{equation*}
\label{eq1}
\left|\sum\limits_{x\in\z}a_1(x,t,m, \varepsilon)^2 - \frac{m \varepsilon}{2 \sqrt{1+m^2 \varepsilon^2}}\right| < \frac{\delta}{3}.
\end{equation*}

Then by Proposition~\ref{prob} we have
\begin{align*}
& |\Delta(T,m,\varepsilon)| \leqslant \frac{2 \varepsilon}{T} \sum\limits_{t=\varepsilon,\ldots,t_0}\sum\limits_{x\in\z}a_1(x,t,m, \varepsilon)^2 + \frac{2 \varepsilon}{T} \sum\limits_{t=\varepsilon,\ldots,t_0} \frac{m \varepsilon}{2 \sqrt{1+m^2 \varepsilon^2}}  + \\
& + \frac{2 \varepsilon}{T} \sum\limits_{t=t_0 + \varepsilon,\ldots,T}\left|\sum\limits_{x\in\z}a_1(x,t,m, \varepsilon)^2 - \frac{m \varepsilon}{2 \sqrt{1+m^2 \varepsilon^2}} \right| \leqslant \\
& \leqslant \frac{2 \varepsilon}{T} \frac{t_0}{\varepsilon} + \frac{2 \varepsilon}{T} \frac{t_0}{\varepsilon} + \frac{2 \varepsilon}{T} \frac{T}{\varepsilon} \frac{\delta}{3} \leqslant \frac{4t_0}{T} + \frac{2 \delta}{3} < \delta \qquad \text{ for } T > \frac{12 t_0}{\delta}.\\
\end{align*}

Since $\delta$ is arbitrary, it follows that $\displaystyle{\lim_{T \to +\infty} \Delta(T,m, \varepsilon) = 0}$ and the theorem follows.
\end{proof}


Theorem~\ref{limit_velocity} provides the limit value, but tells nothing about the convergence rate. But the following proposition provides both for the particular case $m \varepsilon = 1$.

\begin{proposition} \label{limit_velocity_with_O}
For each $\varepsilon>0$ an each $T\in\z_+$ we have 
$$\ee_{T,1/\varepsilon,\varepsilon}\left(v_T\right) = 1 - \frac{1}{\sqrt{2}} + \mathrm{O}\left(\frac{\varepsilon}{T}\right).$$
\end{proposition}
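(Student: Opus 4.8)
The plan is to start from the exact expression for $\ee_{T,m,\varepsilon}(v_T)$ obtained inside the proof of Theorem~\ref{limit_velocity} and keep the error term instead of sending it to zero. Setting $m\varepsilon=1$ there (so that $\frac{m\varepsilon}{\sqrt{1+m^2\varepsilon^2}}=\tfrac{1}{\sqrt2}$ and $\frac{m\varepsilon}{2\sqrt{1+m^2\varepsilon^2}}=\tfrac{1}{2\sqrt2}$) gives
$$\ee_{T,1/\varepsilon,\varepsilon}(v_T)=1-\frac{1}{\sqrt2}-\frac{2\varepsilon}{T}\sum_{t=\varepsilon,\ldots,T}b_t,\qquad b_t:=\sum_{x\in\z}a_1(x,t,1/\varepsilon,\varepsilon)^2-\frac{1}{2\sqrt2}.$$
Thus the whole proposition reduces to the single uniform estimate $\sum_{t=\varepsilon,\ldots,T}b_t=\mathrm{O}(1)$, with the constant independent of $T$ and $\varepsilon$; the prefactor $\tfrac{2\varepsilon}{T}$ then produces exactly the claimed $\mathrm{O}(\varepsilon/T)$.

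Next I would make $b_t$ explicit via Proposition~\ref{p-right-prob}. Writing $c_k:=\binom{2k}{k}/4^k$ and using $\sum_{k=0}^\infty(-1)^kc_k=1/\sqrt2$ (equivalently, the limit $\tfrac{1}{2\sqrt2}$ recorded in Proposition~\ref{p-right-prob}), that proposition says $\sum_{x\in\z}a_1^2=\tfrac12\sum_{k=0}^{\floor{t/2\varepsilon}-1}(-1)^kc_k$, so that $b_t=-\tfrac12 R_{N_t}$, where $N_t:=\floor{t/2\varepsilon}$ and $R_N:=\sum_{k\geq N}(-1)^kc_k$ is the alternating tail. Summing over $t=k\varepsilon$, $k=1,\ldots,T/\varepsilon$, and noting that $\floor{k/2}$ runs through $0,1,1,2,2,\ldots$, the target sum becomes a fixed linear combination of the partial sums $G_M:=\sum_{N=0}^M R_N$. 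Hence everything comes down to showing that $G_M$ is bounded uniformly in $M$.

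The heart of the matter — and the step I expect to be the main obstacle — is extracting enough cancellation from the $t$-sum: the naive bound $|b_t|=\mathrm{O}(\sqrt{\varepsilon/t})$ only yields $\mathrm{O}(\sqrt{\varepsilon/T})$, which is far too weak, so one must exploit the alternation in both indices. I would prove that $c_k$ is positive, strictly decreasing (the ratio equals $\tfrac{2k+1}{2k+2}$) and convex (a short computation gives $c_k-2c_{k+1}+c_{k+2}=\tfrac{3c_k}{(2k+2)(2k+4)}>0$). From $c_k\downarrow 0$ the alternating-series estimate gives $\operatorname{sign}R_N=(-1)^N$ and $|R_N|=c_N-|R_{N+1}|$. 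Then, grouping the tail into second differences,
$$|R_N|-|R_{N+1}|=c_N-2|R_{N+1}|=\sum_{j\geq0}\bigl(c_{N+2j}-2c_{N+2j+1}+c_{N+2j+2}\bigr)\geq0,$$
so $|R_N|$ is non-increasing. Consequently $G_M=\sum_{N=0}^M(-1)^N|R_N|$ is an alternating sum of a non-increasing nonnegative sequence, whence $|G_M|\leq|R_0|=1/\sqrt2$ for every $M$.

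Finally I would assemble the pieces: the bound $|G_M|\leq 1/\sqrt2$ makes the fixed linear combination of partial sums $\mathrm{O}(1)$ (concretely one gets $\bigl|\sum_{t=\varepsilon,\ldots,T}b_t\bigr|\leq 2\sqrt2$ after tracking the boundary terms according to the parity of $T/\varepsilon$), and multiplying by $\tfrac{2\varepsilon}{T}$ yields $\ee_{T,1/\varepsilon,\varepsilon}(v_T)=1-\tfrac{1}{\sqrt2}+\mathrm{O}(\varepsilon/T)$. The only routine work left is verifying the two elementary identities for $c_k$ and the bookkeeping of the boundary terms in the $t$-sum; the conceptual content is entirely in the monotonicity of $|R_N|$, which converts the slowly decaying $b_t$ into a boundedly oscillating partial sum.
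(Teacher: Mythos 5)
Your proposal is correct and takes essentially the same route as the paper: reduce to the uniform bound $\sum_{t}c_t=\mathrm{O}(1)$ for $c_t=\sum_x a_1(x,t,1/\varepsilon,\varepsilon)^2-\tfrac{1}{2\sqrt2}$, then combine Proposition~\ref{p-right-prob} with sign alternation and monotonicity of the tail magnitudes, the key input in both arguments being the positivity of the second difference $c_k-2c_{k+1}+c_{k+2}=\frac{3c_k}{(2k+2)(2k+4)}$ of $c_k=\binom{2k}{k}/4^k$. The only (cosmetic) difference is that you obtain $|R_N|\geq|R_{N+1}|$ by summing second differences directly, whereas the paper derives the equivalent statement $|c_t|>|c_{t+2\varepsilon}|$ from the decrease of consecutive pair-sums via a contradiction with the limit $c_t\to0$.
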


\begin{proof}
Analogously to the proof of Theorem~\ref{limit_velocity}, for each $T\in\z_+$ we have
$$\ee_{T,1/\varepsilon,\varepsilon}\left(v_T\right) = 1 - \frac{1}{\sqrt{2}} - \frac{2 \varepsilon}{T} \sum\limits_{t=\varepsilon,\ldots,T}\left(\sum\limits_{x\in\z}a_1(x,t,1/\varepsilon,\varepsilon)^2 - \frac{1}{2 \sqrt{2}} \right).$$

It remains to show that 
$$\sum\limits_{t=\varepsilon,\ldots,T}\left(\sum\limits_{x\in\z}a_1(x,t,1/\varepsilon,\varepsilon)^2 - \frac{1}{2 \sqrt{2}} \right) =: \sum\limits_{t=\varepsilon,\ldots,T} c_t = \mathrm O(1).$$
This follows from the following facts:
\begin{enumerate}[(i)]
	\item If $t = 2 n \varepsilon$, where $n\in\Z$, then $c_t = c_{t + \varepsilon}$;
	\item $c_t \to 0$ as $t \to \infty;$
	\item For each $t\in \z, t > \varepsilon$, the signs of $c_t$ and $c_{t + 2 \varepsilon}$ are opposite;
	\item For each $t\in\z, t > \varepsilon$, we have $|c_t| > |c_{t + 2\varepsilon}|$.
\end{enumerate}
Here (i) and (ii) follow directly from Proposition~\ref{p-right-prob}. Assertion (iii) follows from Proposition~\ref{p-right-prob} and the fact that $\frac{1}{4^k}\binom{2k}{k} > \frac{1}{4^{k+1}}\binom{2(k+1)}{k+1}$ for $k \in Z_+$. Also note that from Proposition~\ref{p-right-prob} and (iii) we know that $c_t$ is positive if $t=(4n + 2)\varepsilon$ for some $n\in\Z_+ \cup \{0\}$, and $c_t$ is negative if $t=4n\varepsilon$ for some $n\in\Z_+$.

Let us prove (iv). Fix any $t = (4 n + 2) \varepsilon$, where $n \in \Z_+$. By Proposition~\ref{p-right-prob} we have 
$$c_{t + 2 m \varepsilon} = -\frac{1}{2 \sqrt{2}} + \frac{1}{2} \sum\limits_{k=0}^{2 n+m} \frac{1}{(-4)^{k}} \binom{2k}{k} \; (\text{here }m \in \Z_+).$$
Note that 
$$c_{t + 6\varepsilon} + c_{t + 4\varepsilon} = c_{t + 2\varepsilon} + c_{t} + \frac{1}{2}\left(- \frac{1}{4^{2n+1}}\binom{4n+2}{2n+1} + \frac{2}{4^{2n+2}}\binom{4n+4}{2n+2} - \frac{1}{4^{2n+3}}\binom{4n+6}{2n+3}\right).$$
It easy to show that the latter expression in parentheses is negative for any $n\in\Z_+$. Thus
\begin{equation}
\label{q11}
c_{t + 2\varepsilon} + c_{t} > c_{t + 6\varepsilon} + c_{t + 4\varepsilon} > c_{t + 10\varepsilon} + c_{t + 8 \varepsilon} > \ldots
\end{equation}
To prove (iv), we want to show that $|c_t| > |c_{t+2\varepsilon}|$. Equivalently, we want to prove that $c_t + c_{t+2\varepsilon} > 0$. Suppose that this is not true. Then by \eqref{q11} the sequence $\{c_{t + (2 m +2) \varepsilon} + c_{t + 2 m \varepsilon}\}_{m\in\Z_+}$ does not have $0$ as its limit. On the other hand, by (ii) this sequence should have $0$ as its limit. This is a contradiction. One can obtain a similar contradiction for the case when $t = 4n \varepsilon$.

Hence by (i)-(iv) we have $\left|\sum\limits_{t=\varepsilon,\ldots,T} c_t\right| \leqslant 2 c_{2\varepsilon}$ for each $T$, and thus $\sum\limits_{t=\varepsilon,\ldots,T} c_t = \mathrm O(1)$.

\end{proof}

\section{Identities}
\label{identities}

In this section, we present several identities in Feynman checkers. Not all of them pretend to be new, but we could not find them in the literature. All the identities were first discovered in Wolfram Mathematica, sometimes with the help of the On-Line Encyclopedia of Integer Sequences \cite{oeis}. The identities in this subsection should be considered as just some combinatorial equalities, which have absolutely no physical meaning. In all the identities below we assume that $m>0$.

\subsection{Linear identities}

From Proposition~\ref{prob} we know that for each $t\!\in\!\zp$ we have $\sum\limits_{x\in\z} \left( a_1(x,t,m,\varepsilon)^2 + a_2(x,t,m,\varepsilon)^2 \right) = 1$. But what if we consider not the sum of squares, but the sum of $a_1$-s and $a_2$-s themselves? The following proposition gives the answer.

\begin{proposition} \label{id1}
For each $t\in\zp$ we have
$$\sum\limits_{x\in\z} a_1(x,t,m,\varepsilon) = \sin{\left(\frac{t-\varepsilon}{\varepsilon}\arctan{m \varepsilon}\right)} \text{ and } \sum\limits_{x\in\z} a_2(x,t,m,\varepsilon) = \cos{\left(\frac{t-\varepsilon}{\varepsilon}\arctan{m \varepsilon}\right)}.$$
\end{proposition}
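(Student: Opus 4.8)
The plan is to prove both identities simultaneously by induction on $t$, using the Dirac equation (Proposition~\ref{Dirac}). Write $S_1(t) := \sum_{x\in\z} a_1(x,t,m,\varepsilon)$ and $S_2(t) := \sum_{x\in\z} a_2(x,t,m,\varepsilon)$; these sums are finite because $a(x,t,m,\varepsilon)=0$ whenever $|x| > t$. First I would sum each line of Proposition~\ref{Dirac} over all $x\in\z$. Since the summation runs over the whole lattice line $\z$, the shifted arguments $x+\varepsilon$ and $x-\varepsilon$ can be re-indexed back to $x$ without changing the value of the sum, so the terms $\sum_x a_1(x\pm\varepsilon,t-\varepsilon,m,\varepsilon)$ and $\sum_x a_2(x\pm\varepsilon,t-\varepsilon,m,\varepsilon)$ collapse to $S_1(t-\varepsilon)$ and $S_2(t-\varepsilon)$. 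This yields, for $t\geqslant 2\varepsilon$, the closed linear recurrence
\begin{align*}
S_1(t) &= \tfrac{1}{\sqrt{1+m^2\varepsilon^2}}\bigl(S_1(t-\varepsilon) + m\varepsilon\, S_2(t-\varepsilon)\bigr),\\
S_2(t) &= \tfrac{1}{\sqrt{1+m^2\varepsilon^2}}\bigl(S_2(t-\varepsilon) - m\varepsilon\, S_1(t-\varepsilon)\bigr).
\end{align*}

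The key observation is that this recurrence is a rotation. Setting $\theta := \arctan(m\varepsilon)$, so that $\cos\theta = 1/\sqrt{1+m^2\varepsilon^2}$ and $\sin\theta = m\varepsilon/\sqrt{1+m^2\varepsilon^2}$, the recurrence becomes $S_1(t) = \cos\theta\, S_1(t-\varepsilon) + \sin\theta\, S_2(t-\varepsilon)$ and $S_2(t) = \cos\theta\, S_2(t-\varepsilon) - \sin\theta\, S_1(t-\varepsilon)$. I would package this into the single complex relation $Z(t) = e^{i\theta}\, Z(t-\varepsilon)$ for the combination $Z(t) := S_2(t) + i\,S_1(t)$, which follows from the angle-addition identity $e^{i\theta}(S_2+iS_1) = (\cos\theta\, S_2 - \sin\theta\, S_1) + i(\cos\theta\, S_1 + \sin\theta\, S_2)$.

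For the base case I would use $t=\varepsilon$: the only nonzero amplitude on this horizontal is $a(\varepsilon,\varepsilon,m,\varepsilon)=i$ (Example~\ref{edge} with $x=\varepsilon$, or Table~\ref{a-table}), so $S_1(\varepsilon)=0$ and $S_2(\varepsilon)=1$, i.e.\ $Z(\varepsilon)=1$. Iterating the relation $n-1$ times for $t=n\varepsilon$ gives $Z(n\varepsilon) = e^{i(n-1)\theta}$, and taking imaginary and real parts yields exactly $S_1(n\varepsilon) = \sin((n-1)\theta)$ and $S_2(n\varepsilon) = \cos((n-1)\theta)$. This is the claim once we write $n-1 = (t-\varepsilon)/\varepsilon$ and $\theta = \arctan(m\varepsilon)$.

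There is no deep obstacle here; the whole argument reduces to the single structural insight that summing the Dirac equation over a full lattice line decouples the spatial dependence and leaves a $2\times 2$ rotation acting on the pair $(S_1,S_2)$. The only points requiring genuine care are, first, justifying the re-indexing of the sum, which is legitimate because of the finite support of $a(\cdot,t,m,\varepsilon)$; and second, correctly matching the exponent $(t-\varepsilon)/\varepsilon$, which I would pin down by checking $t=\varepsilon$ explicitly rather than trusting the formula at $n=0$.
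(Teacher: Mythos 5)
Your proof is correct and follows essentially the same route as the paper: both sum the Dirac equation (Proposition~\ref{Dirac}) over a full horizontal line and induct on $t$ from the base case $t=\varepsilon$, the only difference being that you package the resulting rotation as multiplication by $e^{i\arctan(m\varepsilon)}$ whereas the paper applies the sine and cosine addition formulae directly (your complex form is precisely the paper's subsequent Corollary). No gaps.
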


\begin{proof}
We prove both formulae simultaneously by induction on $t$. The base $t=\varepsilon$ is obvious. To perform the induction step, suppose that for $t \in \zp$ the formulae from the statement hold. Then by Proposition~\ref{Dirac} and by the summation formulae for the sine and the cosine we have 
\begin{align*}
\sum\limits_{x\in\z} a_1(x,t+\varepsilon,m,\varepsilon) &= \frac{1}{\sqrt{1+m^2 \varepsilon^2}} \sum\limits_{x\in\z} a_1(x,t,m,\varepsilon) + \frac{m \varepsilon}{\sqrt{1+m^2 \varepsilon^2}}\sum\limits_{x\in\z} a_2(x,t,m,\varepsilon) = \\
&= \frac{\sin{\left((t/\varepsilon-1)\arctan{m \varepsilon}\right)}}{\sqrt{1+m^2 \varepsilon^2}} +  \frac{m \varepsilon \cos{\left((t/\varepsilon-1) \arctan{m \varepsilon}\right)}}{\sqrt{1+m^2 \varepsilon^2}} = \\
&=\left[\frac{\sin{\left((t/\varepsilon) \arctan{m \varepsilon}\right)}}{1+m^2 \varepsilon^2} - \frac{m \varepsilon \cos{\left((t/\varepsilon)\arctan{m \varepsilon}\right)}}{1+m^2 \varepsilon^2}\right] + \\
& + \left[\frac{m \varepsilon \cos{\left((t/\varepsilon)\arctan{m \varepsilon}\right)}}{1+m^2 \varepsilon^2} + \frac{m^2 \varepsilon^2 \sin{\left((t/\varepsilon)\arctan{m \varepsilon}\right)}}{1+m^2 \varepsilon^2}\right] = \\
 &= \sin{\left(\frac{t}{\varepsilon} \arctan{m \varepsilon}\right)}.
\end{align*}

The formula for $\sum\limits_{x\in\z} a_2(x,t,m,\varepsilon)$ is proved analogously.
\end{proof}

\begin{corollary}
For each $t\in\zp$ we have
$$\sum\limits_{x\in\z} a(x,t,m,\varepsilon) = i \, \exp{\left(- i \, \frac{t - \varepsilon}{\varepsilon} \arctan{m \varepsilon}\right)}.$$
\end{corollary}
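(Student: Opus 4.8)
The plan is to deduce this corollary directly from the Proposition just proved (Proposition~\ref{id1}) together with the definition of $a$ as the complex number $a = a_1 + i\,a_2$, where $a_1$ and $a_2$ are its real and imaginary parts. First I would split the complex sum by linearity into its real and imaginary parts:
\begin{align*}
\sum\limits_{x\in\z} a(x,t,m,\varepsilon) &= \sum\limits_{x\in\z} a_1(x,t,m,\varepsilon) + i \sum\limits_{x\in\z} a_2(x,t,m,\varepsilon).
\end{align*}

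Next I would substitute the two closed forms from Proposition~\ref{id1}. Writing $\theta := \frac{t-\varepsilon}{\varepsilon}\arctan(m\varepsilon)$ for brevity, this turns the right-hand side into $\sin\theta + i\cos\theta$. The only remaining task is to recognize this particular combination as a single complex exponential. The point is that $\sin\theta + i\cos\theta$ is \emph{not} the standard Euler combination $\cos\theta + i\sin\theta = e^{i\theta}$; rather, factoring out $i$ gives $\sin\theta + i\cos\theta = i(\cos\theta - i\sin\theta) = i\,e^{-i\theta}$, which is exactly the claimed formula. One checks this by expanding $i\,e^{-i\theta} = i(\cos\theta - i\sin\theta) = i\cos\theta + \sin\theta$ and comparing.

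There is no real obstacle here; the entire content of the corollary is already carried by Proposition~\ref{id1}, and what remains is a one-line trigonometric identity. The only thing to be careful about is the sign and placement of the imaginary unit: because the sum of the $a_1$'s yields a sine and the sum of the $a_2$'s yields a cosine (the reverse of the usual $\cos + i\sin$ pairing), the argument of the exponential must carry a minus sign and there must be an overall factor of $i$. Keeping track of exactly this bookkeeping is all that the proof requires.
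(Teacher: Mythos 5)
Your proposal is correct and is exactly the paper's argument: the paper's proof simply says the corollary ``follows directly from Proposition~\ref{id1} and Euler's formula,'' and your write-up spells out that same one-line computation $\sin\theta + i\cos\theta = i\,e^{-i\theta}$ with the sign bookkeeping made explicit.
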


\begin{proof}
Follows directly from Proposition~\ref{id1} and Euler's formula.
\end{proof}

\ \\

Now, perform a change of the coordinates: $(x, t) \mapsto (\lambda, \mu) = (\frac{t+x}{2\varepsilon}, \frac{t-x}{2\varepsilon})$. We just rotate the coordinate axes through $90\degree$ about zero, and afterwards we scale everything $1/\varepsilon$ times (see Figure~\ref{cov}). 
For positive integers $\lambda, \mu$ denote $b(\lambda,\mu,m,\varepsilon) := a(\varepsilon (\lambda - \mu), \varepsilon (\lambda + \mu), m, \varepsilon)$. We use the notation $b_1(\lambda,\mu,m,\varepsilon)$ and $b_2(\lambda,\mu,m,\varepsilon)$ for the real and the imaginary part of $b(\lambda,\mu,m,\varepsilon)$.

\begin{figure}[H]
\centering
\begin{subfigure}{0.38\textwidth}
\includegraphics[width=1\textwidth]{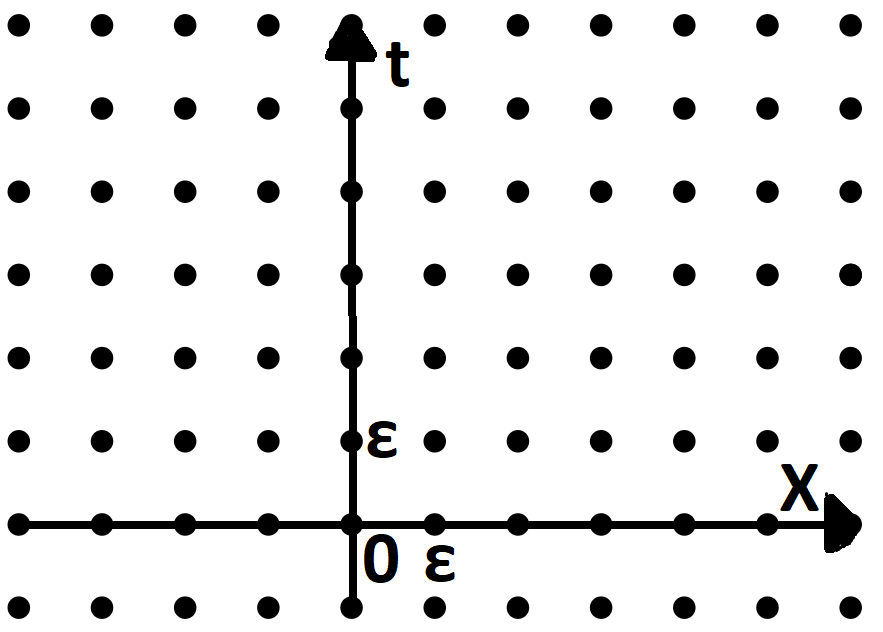}
\end{subfigure}
$\xrightarrow{\text{Change of the coordinates}}$
\begin{subfigure}{0.38\textwidth}
\includegraphics[width=1\textwidth]{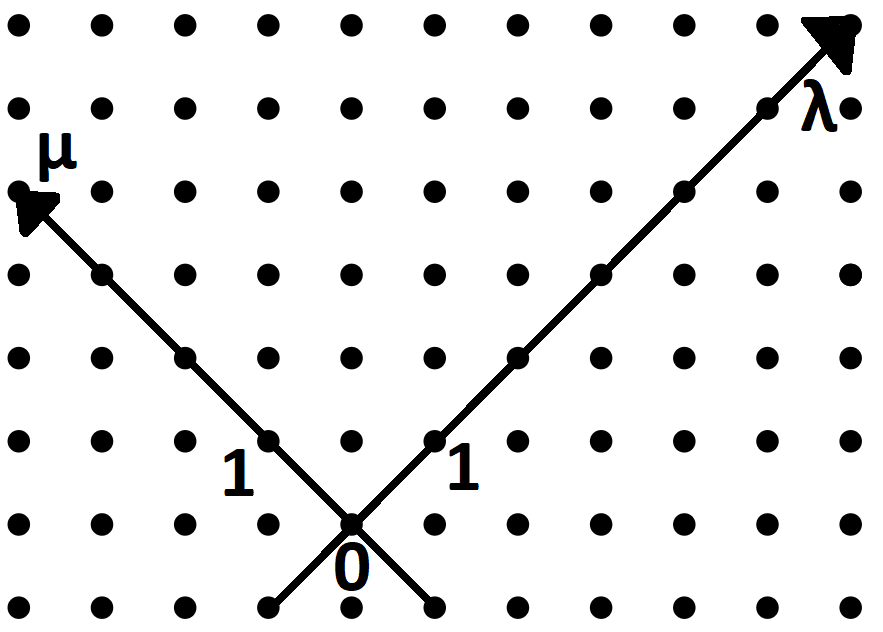}
\end{subfigure}
\caption{}
\label{cov}
\end{figure}

The following table shows $b(\lambda,\mu,m,\varepsilon)$ for small $\lambda$ and $\mu$. The number in a cell $(\lambda,\mu)$ is $b(\lambda,\mu,m,\varepsilon)$. 
(Compare this table with Table~\ref{a-table}.)

\begin{table}[H]
\centering
\begin{tabular}{|c|c|c|c|c|}
\hline
$3$&$\frac{m \varepsilon}{(1+m^2\varepsilon^2)^{3/2}}$&$\frac{(m \varepsilon - 2 m^3 \varepsilon^3) - m^2 \varepsilon^2 i}{(1+m^2 \varepsilon^2)^{2}}$&$\frac{(m \varepsilon - 4m^3 \varepsilon^3 + m^5 \varepsilon^5) + 2 (m^4 \varepsilon^4 - m^2 \varepsilon^2) i}{(1+m^2 \varepsilon^2)^{5/2}}$&$\frac{(m \varepsilon - 6 m^3 \varepsilon^3 + 3m^5 \varepsilon^5) - (3 m^2 \varepsilon^2 - 6 m^4 \varepsilon^4 + m^6 \varepsilon^6) i}{(1+m^2 \varepsilon^2)^{3}}$\\
\hline
$2$&$\frac{m \varepsilon}{1+m^2\varepsilon^2}$&$\frac{(m \varepsilon- m^3 \varepsilon^3) - m^2 \varepsilon^2 i}{(1+m^2\varepsilon^2)^{3/2}}$&$\frac{(m \varepsilon - m^3 \varepsilon^3) + (m^4 \varepsilon^4 - 2 m^2 \varepsilon^2) i}{(1+m^2 \varepsilon^2)^2}$&$\frac{(m \varepsilon - m^3 \varepsilon^3) + 3 (m^4 \varepsilon^4 - m^2 \varepsilon^2) i}{(1+m^2 \varepsilon^2)^{5/2}}$\\
\hline
$1$&$\frac{m\varepsilon}{\sqrt{1+m^2\varepsilon^2}}$&$\frac{m \varepsilon - m^2 \varepsilon^2 i}{1+m^2\varepsilon^2}$&$\frac{m \varepsilon - 2 m^2 \varepsilon^2 i}{(1+m^2\varepsilon^2)^{3/2}}$&$\frac{m \varepsilon - 3 m^2 \varepsilon^2 i}{(1+m^2\varepsilon^2)^{2}}$\\
\hline
$0$&$i$&$\frac{1}{\sqrt{1+m^2 \varepsilon^2}} i$&$\frac{1}{1+m^2 \varepsilon^2} i$&$\frac{1}{(1+m^2 \varepsilon^2)^{3/2}} i$\\
\hline
\diagbox[dir=SW,height=21pt]{$\mu$}{$\lambda$}&$1$&$2$&$3$&$4$ \\
\hline
\end{tabular}
\caption{$b(\lambda,\mu,m,\varepsilon)$ for small $\lambda$ and $\mu$.}
\label{lm-table}
\end{table}

\begin{proposition} \label{sk} \ \\
1) For each fixed $\mu \in \Z_+$ we have
$$ \sum\limits_{\lambda = 1}^{\infty} b_1(\lambda,\mu,m,\varepsilon) = 
(-1)^{\mu+1} \frac{1 + \sqrt{m^2 \varepsilon^2 + 1}}{m \varepsilon} \quad \text{and} \quad
\sum\limits_{\lambda = 1}^{\infty} b_2(\lambda,\mu,m,\varepsilon) =
(-1)^{\mu} \frac{2 + m^2 \varepsilon^2 + 2\sqrt{m^2 \varepsilon^2 + 1}}{m^2 \varepsilon^2}.$$ 
2) For each fixed $\lambda \in \Z_+$ we have
$$ \sum\limits_{\mu = 0}^{\infty} b_1(\lambda,\mu,m,\varepsilon) = 
(-1)^{\lambda+1} \frac{1 + \sqrt{m^2 \varepsilon^2 + 1}}{m \varepsilon} \quad \text{and} \quad
\sum\limits_{\mu = 0}^{\infty} b_2(\lambda,\mu,m,\varepsilon) = 
(-1)^{\lambda+1}.$$
\end{proposition}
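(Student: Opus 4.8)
The plan is to rewrite the Dirac equation (Proposition~\ref{Dirac}) in the rotated coordinates $(\lambda,\mu)$ and then sum the resulting recurrences over one coordinate. Set $s=\sqrt{1+m^2\varepsilon^2}$ for brevity. Since the substitution $(x,t)\mapsto(\lambda,\mu)=(\tfrac{t+x}{2\varepsilon},\tfrac{t-x}{2\varepsilon})$ sends $(x+\varepsilon,t-\varepsilon)\mapsto(\lambda,\mu-1)$ and $(x-\varepsilon,t-\varepsilon)\mapsto(\lambda-1,\mu)$, Proposition~\ref{Dirac} becomes, for all $\lambda+\mu\geqslant 2$,
$$b_1(\lambda,\mu,m,\varepsilon)=\tfrac{1}{s}\left(b_1(\lambda,\mu-1,m,\varepsilon)+m\varepsilon\,b_2(\lambda,\mu-1,m,\varepsilon)\right),$$
$$b_2(\lambda,\mu,m,\varepsilon)=\tfrac{1}{s}\left(b_2(\lambda-1,\mu,m,\varepsilon)-m\varepsilon\,b_1(\lambda-1,\mu,m,\varepsilon)\right).$$
First I would record the boundary values: Example~\ref{edge}(1) gives $b_2(\lambda,0,m,\varepsilon)=s^{1-\lambda}$ and $b_1(\lambda,0,m,\varepsilon)=0$, while $b_1(0,\mu,m,\varepsilon)=b_2(0,\mu,m,\varepsilon)=0$ for $\mu\geqslant 1$, because no checker path starting with an upward-right move can reach the line $x=-t$.

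Before summing, I must check that every series converges absolutely, so that reindexing and term-by-term summation of the recurrences are legitimate. Using the explicit formula of Proposition~\ref{formula} one gets $|b_1|,|b_2|\leqslant s^{1-\lambda-\mu}Q(\lambda,\mu)$, where $Q$ is a sum of at most $\min(\lambda,\mu)$ products of binomial coefficients; with one coordinate fixed this prefactor grows only polynomially in the other, whereas $s>1$ forces geometric decay in $s^{-\lambda-\mu}$. Hence all four series converge absolutely.

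For part~1, fix $\mu\geqslant 1$ and write $S_i(\mu)=\sum_{\lambda\geqslant 1}b_i(\lambda,\mu,m,\varepsilon)$. Summing the first recurrence over $\lambda\geqslant 1$ gives $S_1(\mu)=\tfrac{1}{s}(S_1(\mu-1)+m\varepsilon\,S_2(\mu-1))$; summing the second over $\lambda\geqslant 1$ and shifting the index using $b_i(0,\mu)=0$ yields the algebraic relation $(s-1)S_2(\mu)=-m\varepsilon\,S_1(\mu)$. Substituting this relation (evaluated at $\mu-1$) into the first one and using the identity $m^2\varepsilon^2=(s-1)(s+1)$ collapses everything to $S_1(\mu)=-S_1(\mu-1)$. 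Together with the base value $S_1(1)=\tfrac{s+1}{m\varepsilon}$, computed from $S_1(0)=0$ and $S_2(0)=\sum_{\lambda\geqslant 1}s^{1-\lambda}=\tfrac{s}{s-1}$, this gives $S_1(\mu)=(-1)^{\mu+1}\tfrac{s+1}{m\varepsilon}$, and then $S_2(\mu)=-\tfrac{m\varepsilon}{s-1}S_1(\mu)=(-1)^{\mu}\tfrac{(s+1)^2}{m^2\varepsilon^2}$; expanding $(s+1)^2=2+m^2\varepsilon^2+2s$ recovers the stated closed forms. Part~2 is handled symmetrically by fixing $\lambda$ and writing $T_i(\lambda)=\sum_{\mu\geqslant 0}b_i(\lambda,\mu,m,\varepsilon)$: summing the first recurrence over $\mu\geqslant 1$ produces $(s-1)T_1(\lambda)=m\varepsilon\,T_2(\lambda)$, and summing the second over $\mu\geqslant 0$ (valid for $\lambda\geqslant 2$) produces $T_2(\lambda)=\tfrac{1}{s}(T_2(\lambda-1)-m\varepsilon\,T_1(\lambda-1))$, which combine to $T_2(\lambda)=-T_2(\lambda-1)$. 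The base value $T_2(1)=1$, coming from $b_2(1,0)=\operatorname{Im}(a(\varepsilon,\varepsilon))=1$ and $b_2(1,\mu)=0$ for $\mu\geqslant 1$ (Example~\ref{edge}(2)), gives $T_2(\lambda)=(-1)^{\lambda+1}$ and $T_1(\lambda)=\tfrac{m\varepsilon}{s-1}T_2(\lambda)=(-1)^{\lambda+1}\tfrac{s+1}{m\varepsilon}$.

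The one genuinely clever step is recognizing that the two summed recurrences combine, via $m^2\varepsilon^2/(s-1)=s+1$, into a trivial sign-flip recursion; the rest is the elementary algebra of $s$. The main obstacle I expect is bookkeeping rather than ideas: I must keep straight the validity range $\lambda+\mu\geqslant 2$ of the recurrences near the boundary (which is exactly why the two parts start their summation at $\lambda=1$ but at $\mu=0$), justify each index shift through the vanishing boundary values, and carry out the routine but necessary absolute-convergence estimate before any interchange of summation.
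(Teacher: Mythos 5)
Your proof is correct and follows exactly the route the paper indicates: the paper omits the proof, remarking only that the proposition ``easily follows from Proposition~\ref{Dirac}'', and your argument (rewriting the Dirac equation in the $(\lambda,\mu)$ coordinates, summing, and collapsing the two relations via $m^2\varepsilon^2=(s-1)(s+1)$ to a sign-flip recursion) is precisely a careful execution of that plan. The absolute-convergence check and the attention to the validity range $\lambda+\mu\geqslant 2$ and to the boundary values are details the paper leaves implicit.
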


Proposition~\ref{sk} easily follows from Proposition~\ref{Dirac}. We omit the proof.

\subsection{Quadratic identities}


\begin{proposition} \ \label{quad1} \\
1) For each fixed $\mu \in \Z_+$ we have:
\begin{flalign*}
\quad \sum\limits_{\lambda = 1}^{\infty} b_1(\lambda,\mu,m,\varepsilon)^2 = 1. &&
\end{flalign*}
2) For each fixed $\lambda \in \Z_+$ we have:
\begin{flalign*}
& \quad a) \sum\limits_{\mu = 0}^{\infty} b_1(\lambda,\mu,m,\varepsilon)^2 = 1; \\
& \quad b) \sum\limits_{\mu = 0}^{\infty} b_2(\lambda,\mu,m,\varepsilon)^2 = 1. &&
\end{flalign*}
\end{proposition}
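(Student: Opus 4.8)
The plan is to pass to the rotated coordinates and turn each ``column'' $\mu\mapsto b_1(\lambda,\mu)$, $\mu\mapsto b_2(\lambda,\mu)$ (with $\lambda$ fixed) into a generating function, whose closed form will turn out to be a power of a Blaschke factor; the sums in the statement are then forced to equal $1$ because such a factor is unimodular on the unit circle. Throughout set $c:=1/\sqrt{1+m^2\varepsilon^2}$ and $s:=m\varepsilon/\sqrt{1+m^2\varepsilon^2}$, so that $c^2+s^2=1$ and $0<c<1$ (recall $m>0$). First I would rewrite Proposition~\ref{Dirac} in the coordinates $(\lambda,\mu)$: since the points $(x+\varepsilon,t-\varepsilon)$ and $(x-\varepsilon,t-\varepsilon)$ correspond to $(\lambda,\mu-1)$ and $(\lambda-1,\mu)$, Proposition~\ref{Dirac} becomes
\begin{align*}
b_1(\lambda,\mu) &= c\,b_1(\lambda,\mu-1)+s\,b_2(\lambda,\mu-1), & b_2(\lambda,\mu) &= c\,b_2(\lambda-1,\mu)-s\,b_1(\lambda-1,\mu).
\end{align*}
From Example~\ref{edge} one reads off the boundary data $b_1(\lambda,0)=0$, $b_2(\lambda,0)=c^{\lambda-1}$, and $b_1(1,\mu)=s\,c^{\mu-1}$, $b_2(1,\mu)=0$ for $\mu\geqslant1$.

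Next, for fixed $\lambda\geqslant1$ I would introduce $f_\lambda(z):=\sum_{\mu\geqslant0}b_1(\lambda,\mu)z^\mu$ and $g_\lambda(z):=\sum_{\mu\geqslant0}b_2(\lambda,\mu)z^\mu$. By Proposition~\ref{formula} each coefficient equals $c^{\lambda+\mu-1}$ times a polynomial in $\mu$, so both series converge in the disk $|z|<1/c$, which contains the unit circle. Multiplying the first recursion by $z^\mu$ and summing over $\mu\geqslant1$ (the term $\mu=0$ vanishes since $b_1(\lambda,0)=0$) gives $f_\lambda=cz\,f_\lambda+sz\,g_\lambda$, i.e. $g_\lambda=\frac{1-cz}{sz}\,f_\lambda$; the second recursion gives $g_\lambda=c\,g_{\lambda-1}-s\,f_{\lambda-1}$. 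Eliminating $g$ yields the single recursion $f_\lambda=\frac{c-z}{1-cz}\,f_{\lambda-1}$, which together with the base value $f_1(z)=\frac{sz}{1-cz}$ gives the closed forms
$$f_\lambda(z)=\frac{sz}{1-cz}\left(\frac{c-z}{1-cz}\right)^{\lambda-1}, \qquad g_\lambda(z)=\left(\frac{c-z}{1-cz}\right)^{\lambda-1}.$$

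Finally I would use the identity $\frac{c-z}{1-cz}\cdot\frac{c-1/z}{1-c/z}=1$, which expresses that the Blaschke factor is unimodular on $|z|=1$. It gives $g_\lambda(z)\,g_\lambda(1/z)=1$ and $f_\lambda(z)\,f_\lambda(1/z)=\frac{s^2}{(1-cz)(1-c/z)}$, both independent of $\lambda$. Since $\sum_{\mu}b_2(\lambda,\mu)^2$ is exactly the constant Laurent coefficient of $g_\lambda(z)g_\lambda(1/z)$, it equals $[z^0]\,1=1$, which proves 2b; likewise $\sum_{\mu}b_1(\lambda,\mu)^2=[z^0]\frac{s^2}{(1-cz)(1-c/z)}=s^2\sum_{j\geqslant0}c^{2j}=\frac{s^2}{1-c^2}=1$, which proves 2a. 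Part 1 then reduces to 2a: Proposition~\ref{symmetry}(1) gives $b_1(\lambda,\mu)=b_1(\mu,\lambda)$, so for fixed $\mu\in\Z_+$ we get $\sum_{\lambda\geqslant1}b_1(\lambda,\mu)^2=\sum_{\lambda\geqslant1}b_1(\mu,\lambda)^2=1$ by 2a (the absent $\lambda=0$ term is $b_1(\mu,0)^2=0$). I expect the only real obstacle to be the middle step---combining the two Dirac recursions, which run in orthogonal directions, into one recursion for $f_\lambda$ and verifying the resulting closed form; once the Blaschke factor is identified, the remaining constant-term evaluation and the convergence check are routine.
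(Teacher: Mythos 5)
Your proof is correct, and it takes a genuinely different route from the paper's. The paper first establishes a \emph{generalized probability conservation law} (Proposition~\ref{gen-prob}): for any finite barrier $T$ that every infinite checker path from the origin must hit, the absorption probabilities $P(x,t\text{ bypass }T\setminus(x,t);m,\varepsilon)$ sum to $1$; this is proved by a Kirchhoff-current-law argument on lattice edges. Part 1) is then obtained by applying this law to the barrier $S_n\cup T_n$ consisting of the row $\{1\leqslant\lambda\leqslant n,\ \mu=\widehat\mu\}$ capped by the diagonal $\{\lambda+\mu=n+\widehat\mu,\ 0\leqslant\mu<\widehat\mu\}$ and letting $n\to\infty$, the explicit formula of Proposition~\ref{formula} showing that the cap's contribution vanishes; part 2a) is deduced from 1) via the symmetry $b_1(\lambda,\mu)=b_1(\mu,\lambda)$ --- exactly the reverse of your reduction --- and 2b) is proved analogously to 1). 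You instead package each column into a generating function, identify it as a power of the Blaschke factor $\frac{c-z}{1-cz}$, and read off the sums of squares as constant Laurent coefficients (equivalently, by Parseval on $|z|=1$). I checked your computations --- the recursions in $(\lambda,\mu)$-coordinates, the boundary data from Example~\ref{edge}, the closed forms for $f_\lambda$ and $g_\lambda$, and the growth bound giving convergence in the annulus $c<|z|<1/c$ that legitimizes extracting the constant coefficient --- and they are all correct. The paper's route is more combinatorial and produces a reusable conservation law for arbitrary barriers; yours produces closed forms carrying strictly more information. In particular, the same elimination for fixed $\mu\geqslant1$ gives $G_\mu(z):=\sum_{\lambda\geqslant1}b_2(\lambda,\mu)z^\lambda=-\frac{s^2z^2}{(1-cz)^2}\bigl(\frac{c-z}{1-cz}\bigr)^{\mu-1}$, whence $\sum_{\lambda\geqslant1}b_2(\lambda,\mu)^2=[z^0]\frac{s^4}{(1-cz)^2(1-c/z)^2}=\frac{(1+c^2)}{s^2}=\frac{2+m^2\varepsilon^2}{m^2\varepsilon^2}$, i.e.\ part 1) of the concluding Conjecture, which the paper remarks cannot be proved by its own method; you may want to record this.
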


In the proof of quadratic identities, Proposition~\ref{Dirac} does not help much. To prove Proposition~\ref{quad1}, we need a generalization of Proposition~\ref{prob}. For this purpose, we need an auxiliary definition.

\begin{definition}[{\cite{main}}]
For a set $T\subset\zz$ and a point $(x,t)\in\zz$ with $t>0$, we define $a(x,t\text{ bypass }T;m,\varepsilon)$ analogously to $a(x,t,m,\varepsilon)$, only the sum is over those checker paths  that do not pass through the points of the set $T$. Denote
$P(x,t\text{ bypass }T;m,\varepsilon) = \left|a(x,t\text{ bypass }T;m,\varepsilon)\right|^2$.
We denote by $a_1(x,t\text{ bypass }T;m,\varepsilon)$ and $a_2(x,t\text{ bypass }T;m,\varepsilon)$ the real and the imaginary part of $a(x,t\text{ bypass }T;m,\varepsilon)$ respectively.
\end{definition}


\begin{example}
Let us compute $P(0,4\text{ bypass }(2\varepsilon,2\varepsilon);m,\varepsilon)$. Now, we must not consider the leftmost checker path from Figure~\ref{example1}. Thus 
\begin{align*}
a(0,4\varepsilon \text{ bypass }(2\varepsilon,2\varepsilon);m,\varepsilon) 
&= \frac{ - m^2 \varepsilon^2 - m^3 \varepsilon^3 i}{(1+m^2\varepsilon^2)^{3/2}}; \\
P(0,4\varepsilon \text{ bypass }(2\varepsilon,2\varepsilon);m,\varepsilon) &=  \frac{m^4 \varepsilon^4+ m^6 \varepsilon^6}{(1+m^2\varepsilon^2)^{3}} = \frac{m^4 \varepsilon^4}{(1+m^2\varepsilon^2)^{2}}.
\end{align*}
\end{example}

The following proposition generalizes Proposition~\ref{Dirac}. We do not give the proof because it is essentially the same as the proof of Proposition~\ref{Dirac} (See proof of Proposition~4 from \cite{main}).

\begin{proposition} \label{gen-Dirac}
For each set $T \subset \zz$ and each point $(x,t)\in\zz$ such that $(x,t)\notin T$ and $t\geqslant 2\varepsilon$ we have:
\begin{align*}
1) \: a_1(x,t\text{ bypass }T;m,\varepsilon) &= \frac{1}{\sqrt{1+m^2\varepsilon^2}}(a_1(x+\varepsilon,t-\varepsilon \text{ bypass }T;m,\varepsilon) + m \varepsilon \, a_2(x+\varepsilon,t-\varepsilon \text{ bypass }T;m,\varepsilon);\\
2) \: a_2(x,t\text{ bypass }T;m,\varepsilon) &= \frac{1}{\sqrt{1+m^2\varepsilon^2}}(a_2(x-\varepsilon,t-\varepsilon \text{ bypass }T;m,\varepsilon) - m \varepsilon \, a_1(x-\varepsilon,t-\varepsilon \text{ bypass }T; m,\varepsilon)).
\end{align*}
\end{proposition}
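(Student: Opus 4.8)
The plan is to repeat the proof of Proposition~\ref{Dirac} almost word for word, the sole new point being that the path bijection used there respects the bypass condition. Recall from the remark following the definition of $P_{t,m,\varepsilon}(x,\pm)$ that the imaginary part $a_2$ collects the contributions of those checker paths that finish with an upward-right move (equivalently, have an even number of turns), while the real part $a_1$ collects the contributions of the paths finishing with an upward-left move (odd number of turns). The identical splitting holds for $a_2(x,t\text{ bypass }T;m,\varepsilon)$ and $a_1(x,t\text{ bypass }T;m,\varepsilon)$, the only difference being that every sum now ranges over $T$-avoiding paths. I will prove part~1); part~2) is completely symmetric (one interchanges upward-left and upward-right moves and the penultimate vertex $(x+\varepsilon,t-\varepsilon)$ with $(x-\varepsilon,t-\varepsilon)$).

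First I would fix $(x,t)\in\zz$ with $(x,t)\notin T$ and $t\br 2\varepsilon$, and look at a $T$-avoiding path $s$ from $(0,0)$ to $(x,t)$ whose last move is upward-left, i.e.\ a path contributing to $a_1(x,t\text{ bypass }T;m,\varepsilon)$. Its penultimate vertex is $(x+\varepsilon,t-\varepsilon)$, and I would classify $s$ according to the move by which it entered that vertex: either an upward-left move (no turn at $(x+\varepsilon,t-\varepsilon)$) or an upward-right move (a turn there). Deleting the last step sends $s$ to a $T$-avoiding path $s'$ from $(0,0)$ to $(x+\varepsilon,t-\varepsilon)$, ending upward-left in the first case and upward-right in the second; this deletion is a bijection onto all such $s'$. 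In the first case $\mathrm{turns}(s)=\mathrm{turns}(s')$ and in the second $\mathrm{turns}(s)=\mathrm{turns}(s')+1$, contributing an extra factor $(-im\varepsilon)$; meanwhile passing from time $t-\varepsilon$ to time $t$ multiplies the normalizing factor $(1+m^2\varepsilon^2)^{(1-t/\varepsilon)/2}$ by $(1+m^2\varepsilon^2)^{-1/2}$. Summing the two classes and matching real parts yields exactly
$$a_1(x,t\text{ bypass }T;m,\varepsilon)=\frac{1}{\sqrt{1+m^2\varepsilon^2}}\left(a_1(x+\varepsilon,t-\varepsilon\text{ bypass }T;m,\varepsilon)+m\varepsilon\,a_2(x+\varepsilon,t-\varepsilon\text{ bypass }T;m,\varepsilon)\right),$$
just as in Proposition~\ref{Dirac}.

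The one spot where the bypass hypothesis enters --- and the only genuine difference from the proof of Proposition~\ref{Dirac} --- is the claim that deleting the last step is a bijection between $T$-avoiding paths ending at $(x,t)$ and $T$-avoiding paths ending at $(x+\varepsilon,t-\varepsilon)$. This is where I would use the assumption $(x,t)\notin T$: since the only vertex removed by the truncation is $(x,t)$, a path through $(x+\varepsilon,t-\varepsilon)$ avoids $T$ if and only if its truncation does, so the correspondence preserves the bypass constraint in both directions. There is no real obstacle here; the entire subtlety is this endpoint check, after which the factor bookkeeping is verbatim that of Proposition~\ref{Dirac}.
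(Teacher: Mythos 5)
Your proof is correct and follows exactly the route the paper intends: the paper omits the argument, remarking only that it is ``essentially the same as the proof of Proposition~\ref{Dirac}'', and you supply precisely that last-step/penultimate-step decomposition with the correct turn and normalization bookkeeping. You also correctly isolate the single new ingredient --- that truncating the final move is a bijection on $T$-avoiding paths because the only vertex removed is $(x,t)\notin T$ --- which is exactly why the hypothesis $(x,t)\notin T$ appears in the statement.
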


\begin{remark}
If we apply the above proposition for $T = \emptyset$, then we obtain Proposition~\ref{Dirac}.
\end{remark}

The following proposition was first stated and proven by G.~Minaev and I.~Russkikh, but their proof was quite complicated and technical. We present a simple alternative proof.

\begin{proposition}[Generalized probability conservation law, Minaev-Russkikh, private communication] \label{gen-prob}
For each finite set $T\in\zz$ such that there is no infinite checker path from $(0,0)$ bypassing the points of $T$ we have 
$$\sum_{(x,t) \in T} P(x,t\text{ bypass }T\setminus (x,t);m,\varepsilon)=1.$$
\end{proposition}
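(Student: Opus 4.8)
The plan is to treat the finite set $T$ as an \emph{absorbing boundary} and to track a total ``probability'' as the time coordinate increases, splitting it into the part already absorbed by $T$ and the part carried by paths that have so far avoided $T$. I will show this total is conserved from one time level to the next, equals $1$ at the bottom, and that the not-yet-absorbed part eventually vanishes; since $T$ is finite, what survives is exactly the left-hand side of the asserted identity. We may assume $T\subseteq\{t\br\varepsilon\}$, which is forced by the fact that $P(x,t\text{ bypass}\ldots)$ is only defined for $t>0$.

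Concretely, for $\tau\in\zp$ set
$$\Sigma(\tau)=\sum_{(x,\tau)\notin T}\left|a(x,\tau\text{ bypass }T;m,\varepsilon)\right|^2+\sum_{\substack{(x,t)\in T\\ t\mr\tau}}\left|a(x,t\text{ bypass }T\setminus(x,t);m,\varepsilon)\right|^2,$$
the first sum being the ``survival'' amplitude of paths reaching time $\tau$ without ever meeting $T$, the second the ``absorbed'' amplitude of paths whose first visit to $T$ has already occurred. Since the forced first step lands the walk at $(\varepsilon,\varepsilon)$ with $a(\varepsilon,\varepsilon,m,\varepsilon)=i$ (Example~\ref{edge}), one reads off $\Sigma(\varepsilon)=1$. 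The identity follows once I establish (B) $\Sigma(\tau+\varepsilon)=\Sigma(\tau)$ and (C) the survival sum is $0$ for large $\tau$.

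For (B) the decisive observation — and the crux of the argument — is that the first-passage amplitude at a boundary point obeys the \emph{same} recursion as an ordinary survival amplitude. Applying Proposition~\ref{gen-Dirac} with the set $T\setminus(x,t)$ at a point $(x,t)\in T$, and noting that for paths of height $t-\varepsilon$ the conditions ``bypass $T\setminus(x,t)$'' and ``bypass $T$'' coincide (a time-$t$ point cannot lie on such a path), one gets for \emph{every} $x$
$$a_1(x,\tau+\varepsilon)=\frac{a_1(x+\varepsilon,\tau)+m\varepsilon\,a_2(x+\varepsilon,\tau)}{\sqrt{1+m^2\varepsilon^2}},\qquad a_2(x,\tau+\varepsilon)=\frac{a_2(x-\varepsilon,\tau)-m\varepsilon\,a_1(x-\varepsilon,\tau)}{\sqrt{1+m^2\varepsilon^2}},$$
where on the left the amplitude is survival if $(x,\tau+\varepsilon)\notin T$ and first-passage if $(x,\tau+\varepsilon)\in T$, while on the right all amplitudes are survival amplitudes ``bypass $T$'', which vanish identically at boundary points (a path cannot end at a forbidden point). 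Summing $a_1^2+a_2^2$ over all $x$, reindexing the two resulting sums by $x+\varepsilon$ and $x-\varepsilon$, and expanding, the cross terms cancel and the factor $1+m^2\varepsilon^2$ is absorbed, exactly as in the proof of Proposition~\ref{prob}. This gives $\sum_x|a(x,\tau+\varepsilon)|^2=\sum_x|a(x,\tau\text{ bypass }T)|^2$; unpacking the left side into survival-plus-newly-absorbed at level $\tau+\varepsilon$ and the right side into survival at level $\tau$ yields precisely $\Sigma(\tau+\varepsilon)=\Sigma(\tau)$.

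Finally (C): let $t_{\max}$ be the largest time coordinate in $T$. For $\tau>t_{\max}$, any path reaching time $\tau$ while bypassing $T$ could be prolonged to an infinite checker path still bypassing $T$ (all later points lie strictly above $T$), contradicting the hypothesis; hence every survival amplitude vanishes and the survival sum is $0$. Combining with $\Sigma(\tau)\equiv 1$ forces $\sum_{(x,t)\in T}P(x,t\text{ bypass }T\setminus(x,t);m,\varepsilon)=1$. I expect the only genuinely delicate point to be the uniformity claim in (B) — that absorbed and surviving amplitudes satisfy one and the same recursion, so the unitary cancellation behind Proposition~\ref{prob} transfers verbatim; steps (A) and (C) are then soft.
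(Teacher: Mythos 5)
Your proof is correct and is essentially the paper's own argument in different bookkeeping: your conserved quantity $\Sigma(\tau)$ (surviving plus already-absorbed probability) is exactly the paper's Kirchhoff-type conservation law \eqref{currents} summed over all vertices up to time $\tau$, with the paper's red-edge cut playing the role of your absorbed terms. Both arguments rest on the same two pillars — Proposition~\ref{gen-Dirac} yielding the local conservation/unitarity cancellation as in Proposition~\ref{prob}, and the finiteness plus no-infinite-path hypothesis forcing the surviving mass to vanish above $t_{\max}$.
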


\begin{remark}
If we apply the above proposition for the set 
$T_t = \{(x, t) \in \zz \,:\, - t \leqslant x \leqslant t)\}$ (where $t$ is fixed), then we obtain Proposition~\ref{prob}.
\end{remark}

\begin{proof}[Proof of Proposition~\ref{gen-prob}] (See Figure~\ref{edges}) Join each point $(x, t) \in \zz$, where $t>0$ and $(t + x) / \varepsilon$ is even, with the points $(x-\varepsilon,t - \varepsilon)$ and $(x+\varepsilon,t - \varepsilon)$. Assign the numbers $a_2(x, t\text{ bypass }T \setminus (x, t); m, \varepsilon)^2$ and $a_1(x, t\text{ bypass }T \setminus (x, t); m, \varepsilon)^2$ respectively to the resulting edges. An edge joining 
$(x \pm \varepsilon, t-\varepsilon)$ and $(x, t)$ is painted red, if $(x,t) \in T$ and $(x \pm \varepsilon, t-\varepsilon) \notin T$ (see Figure~\ref{edg2}). 

It is clear that
$$\sum_{(x,t) \in T} P(x,t\text{ bypass }T\setminus (x,t);m,\varepsilon) =  \sum_{\text{red edges } e} j(e),$$
where $j(e)$ is the number assigned to the edge $e$. Now, the required assertion follows from the following two observations: \\
1) $j((0,0)\text{-}(\varepsilon, \varepsilon)) = 1$ (We denote by $a\text{-}b$ the edge joining points $a$ and $b$);\\
2) (see Figure~\ref{saving}) For each point $(x,t)\in\zz$ (with positive $t$ and even $(x+t)/\varepsilon$) we have
\begin{equation} \label{currents}
j((x,t)\text{-}(x + \varepsilon, t-\varepsilon)) + j((x,t)\text{-}(x - \varepsilon, t-\varepsilon)) = 
j((x,t)\text{-}(x + \varepsilon, t+\varepsilon)) + j((x,t)\text{-}(x - \varepsilon, t+\varepsilon)).
\end{equation}
The latter observation easily follows from Proposition~\ref{gen-Dirac}.
\end{proof}

\begin{remark}
Suppose that in Figure~\ref{saving} there is electrical current, which flows into the point $(x,t)$ from the points $(x \pm \varepsilon,t-\varepsilon)$, and out of the point $(x,t)$ to the points $(x \pm \varepsilon,t + \varepsilon)$. Then equation~\eqref{currents} is just Kirchhoff's current law.
\end{remark}

\begin{figure}[H]
\begin{subfigure}{0.32\textwidth}
\includegraphics[width=1.0\textwidth]{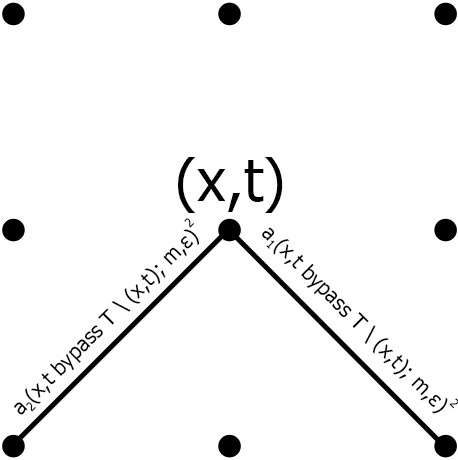}
\caption{}
\label{edges}
\end{subfigure}
\hfill
\begin{subfigure}{0.40\textwidth}
\includegraphics[width=1.0\textwidth]{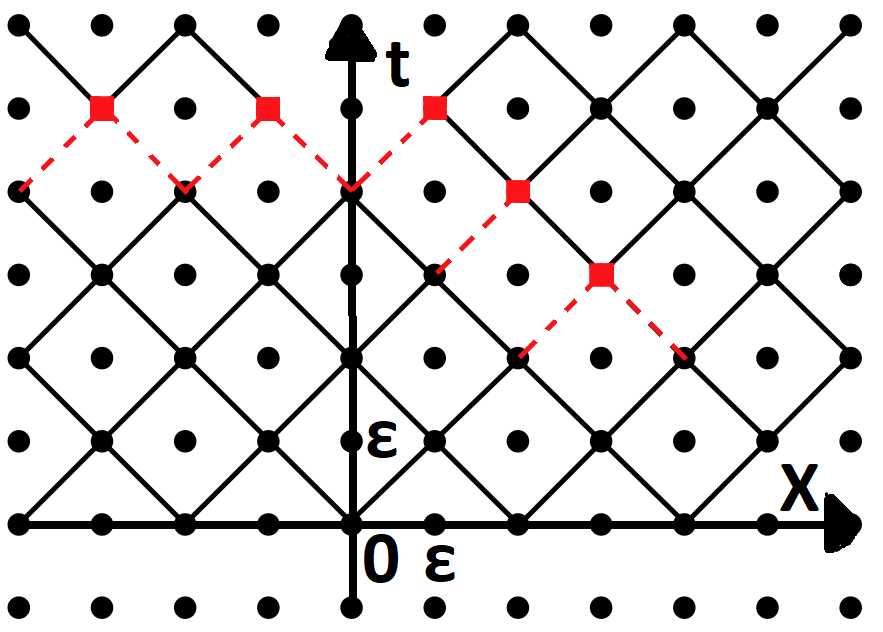}
\caption{}
\label{edg2}
\end{subfigure}
\hfill
\begin{subfigure}{0.24\textwidth}
\includegraphics[width=1.0\textwidth]{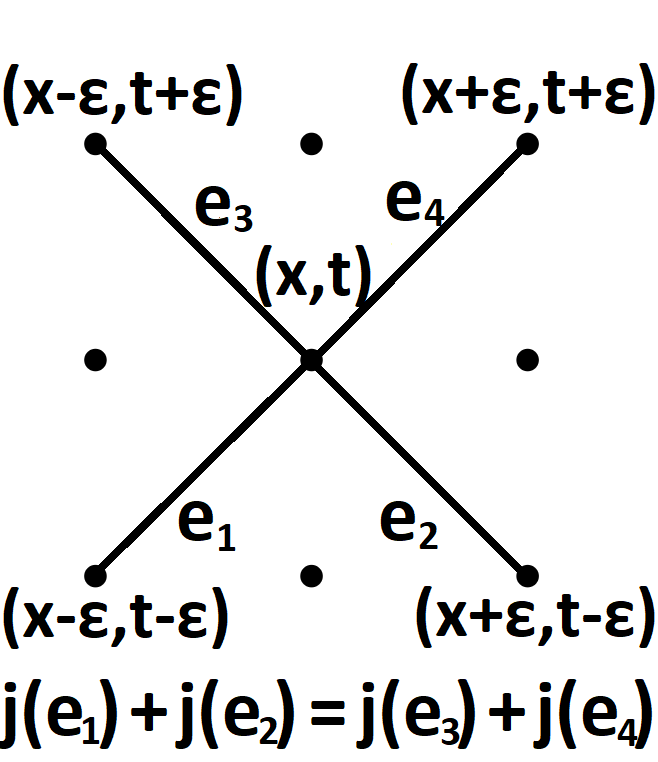}
\caption{}
\label{saving}
\end{subfigure}
\caption{In (b) the red (dash) edges for a particular set $T$ consisting of the red (square) points}
\end{figure}

Using the latter proposition, we finally prove Proposition~\ref{quad1}.

\begin{proof}[Proof of Proposition~\ref{quad1}]
Let us prove assertion 1). Fix $\widehat \mu \in \Z_+$. Consider two sequences of sets of points in $\Z^2$, $S_n$ and $T_n$,
defined as follows:
$S_n = \{ (\lambda, \mu) \in \Z^2  \, | \, 1 \leqslant \lambda \leqslant n, \mu = \widehat \mu \}$, 
$T_n = \{ (\lambda, \mu) \in \Z^2  \, | \, 0 \leqslant \mu < \widehat \mu, \lambda + \mu = n + \widehat \mu \}$. Table~\ref{examples} shows sets $T_n$ and $S_n$ for small $n, \widehat \mu$. By Proposition~\ref{gen-prob} for each $n \in \Z_+$ we have
$$1 = \sum_{(\lambda,\mu) \in T_n \cup S_n} P(\lambda,\mu \text{ bypass }(T_n \cup S_n) \setminus (\lambda,\mu);m,\varepsilon) = 
\sum_{\lambda = 1}^{n} b_1(\lambda, \widehat \mu,m,\varepsilon)^2 + 
\sum_{\mu = 0}^{ \widehat \mu - 1} \left( b_1(n, \mu,m,\varepsilon)^2 + b_2(n, \mu,m,\varepsilon)^2 \right),$$
where the last equality holds because any checker path to a point of $T_n$ cannot pass through other points of $T_n \cup S_n$, and also because a checker path to a point of $S_n$ does not pass through other points of $T_n \cup S_n$ if and only if the checker path finishes with an upwards-left move.

Thus it remains to prove that for each fixed $\mu$ we have $b_1(n, \mu,m,\varepsilon) \to 0$ and $b_2(n, \mu,m,\varepsilon) \to 0$ as $n \to +\infty$. By Proposition~\ref{formula} for $\lambda \geqslant \mu$ we have
$$b_1(n, \mu,m,\varepsilon) = 
\sum\limits_{r=0}^{\mu -1} (1+m^2 \varepsilon^2)^{(1 - n - \mu)/2} (-1)^r (m \varepsilon)^{2r+1} \binom{\mu - 1}{r} \binom{n - 1}{r}.$$

Each summand in the sum tends to $0$ as $n \to +\infty$. Since for each fixed $\mu$ the number of summands is finite, it follows that $b_1(n, \mu,m,\varepsilon) \to 0$ as $n \to +\infty$. Analogously $b_2(n, \mu,m,\varepsilon) \to 0$. Assertion 1) is proved. 

Assertion 2a) follows from 1) by the first identity of Proposition~\ref{symmetry}, and assertion 2b) is proved analogously to 1).
\end{proof}

\begin{table}[H]
\begin{center}
\caption{Sets $T_n$ and $S_n$ for small $n, \widehat \mu$}
\label{examples}
\begin{tabular}{|c| p{4.1cm} | p{4.1cm} | p{4.1cm} | p{4.1cm} |}
\hline
\multicolumn{4}{|c|}{\includegraphics[trim = 0 60 0 -1, width = 0.3\textwidth]{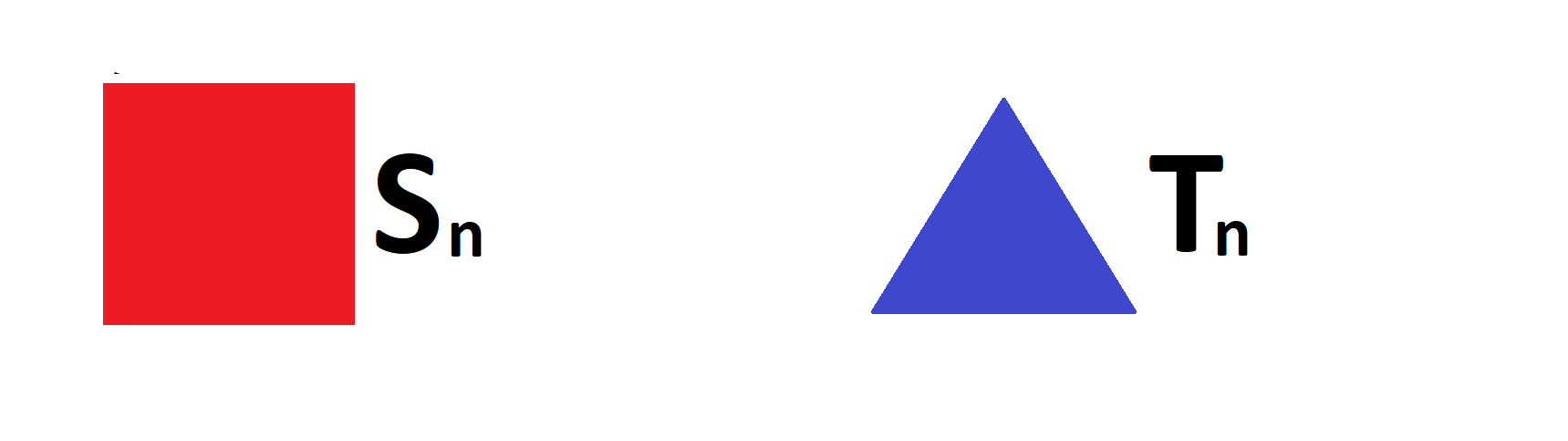}}\\
\hline
$2$
&\includegraphics[trim = 0 0 0 -1, width=0.24\textwidth]{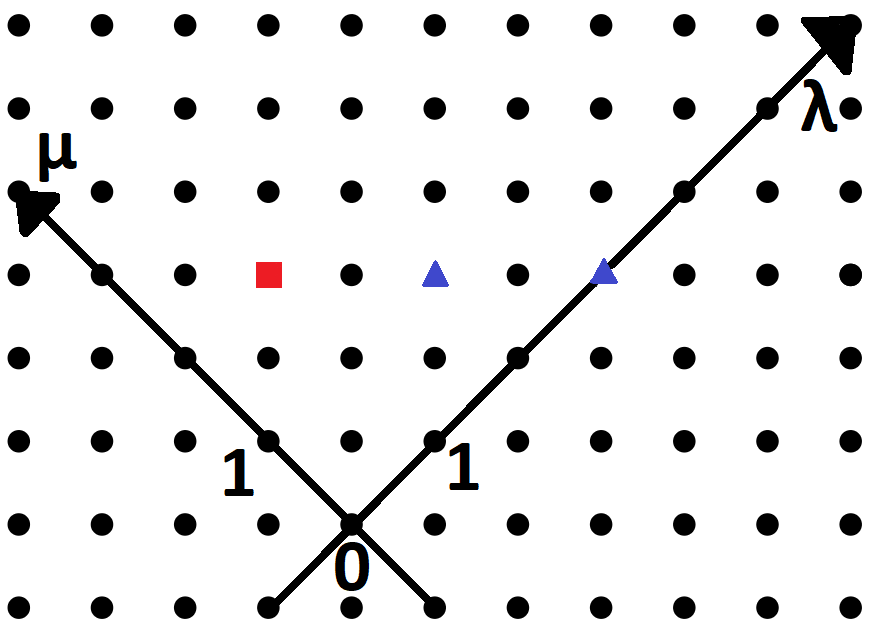}
& \includegraphics[width=0.24\textwidth]{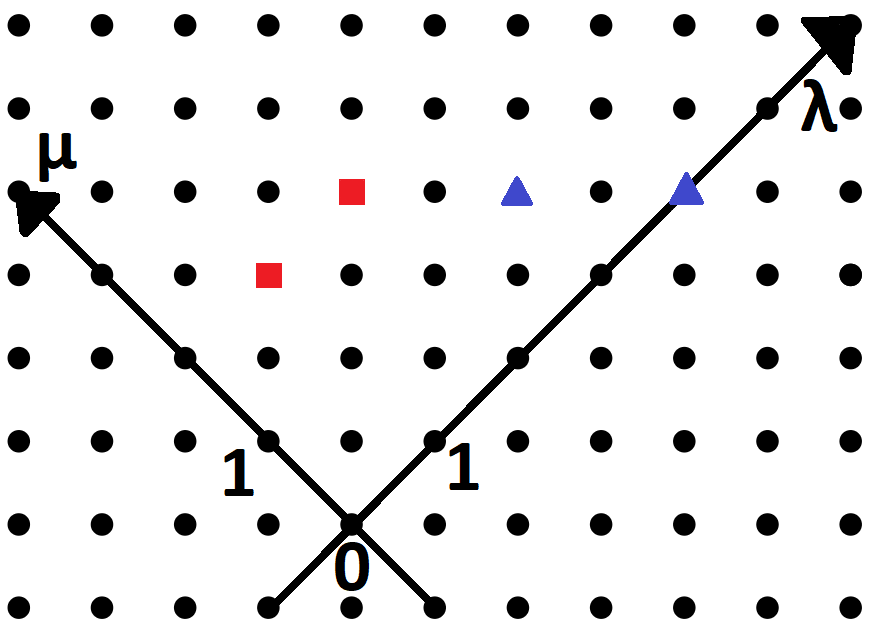}
& \includegraphics[width=0.24\textwidth]{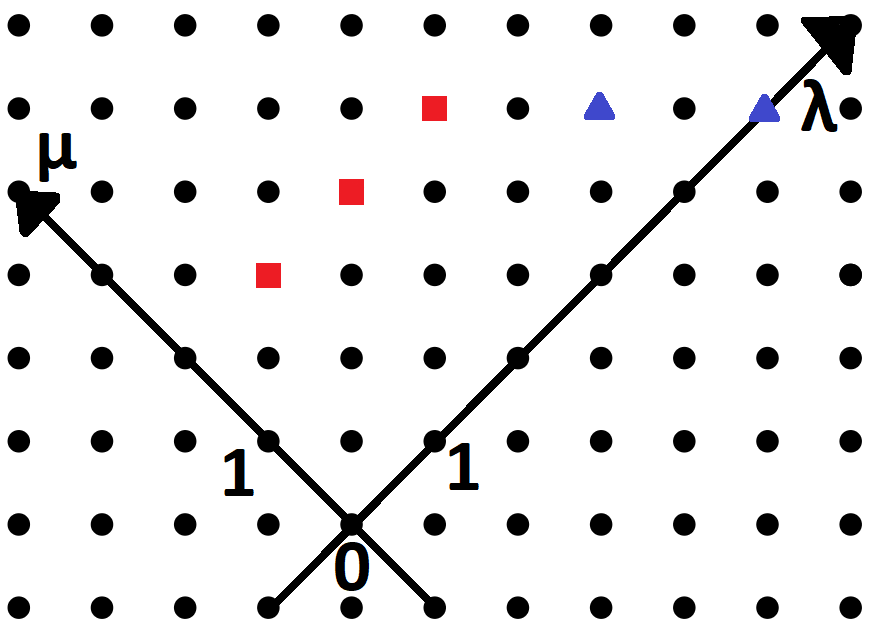} \\
\hline
$1$
&\includegraphics[trim = 0 0 0 -1, width=0.24\textwidth]{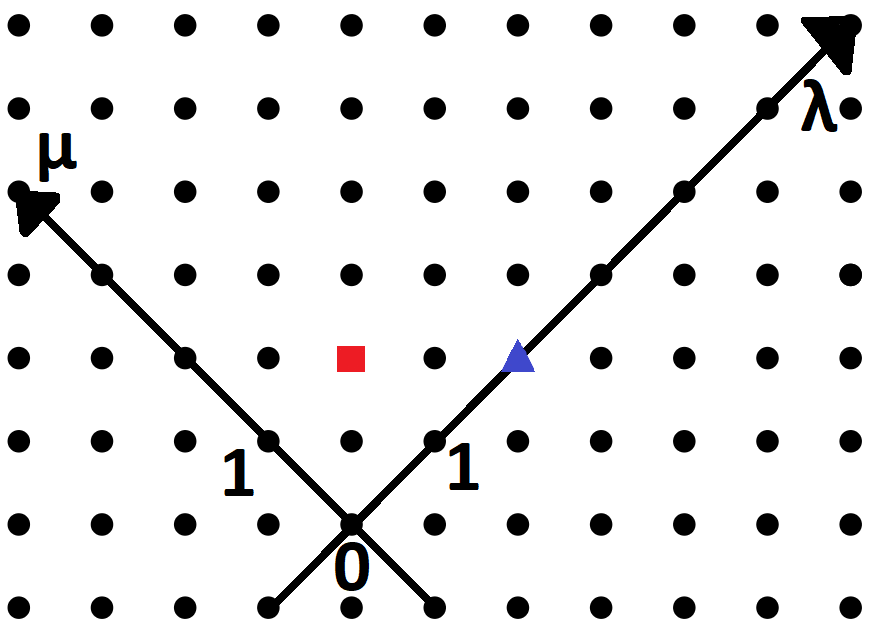}
& \includegraphics[width=0.24\textwidth]{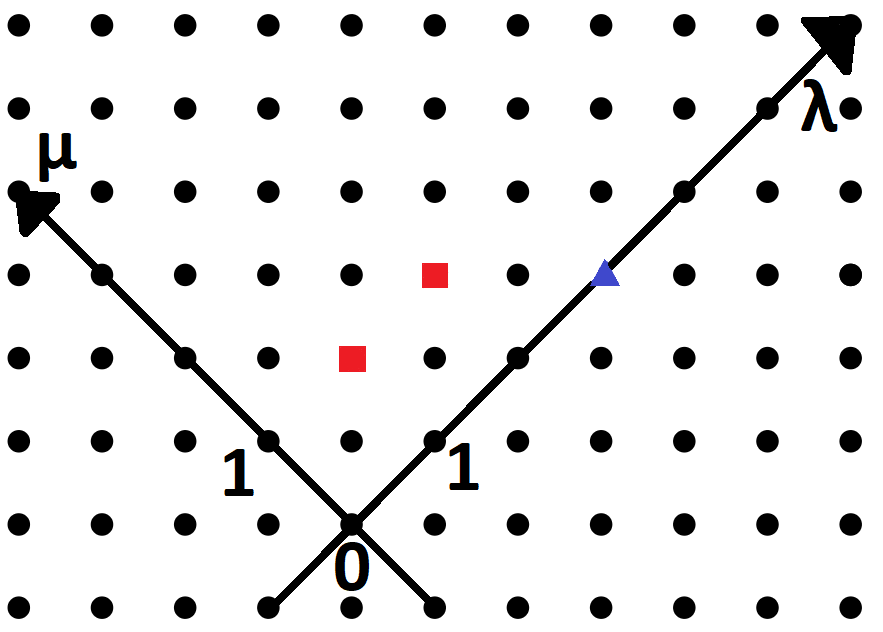}
& \includegraphics[width=0.24\textwidth]{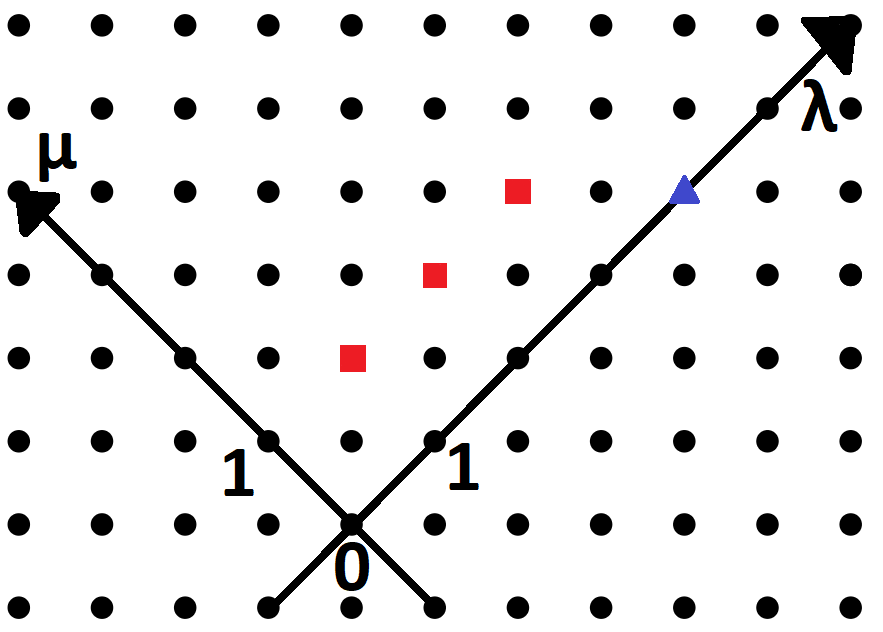} \\
\hline
\diagbox[dir=SW,height=21pt]{$\widehat \mu$}{$n$}&$1$&$2$&$3$ \\
\hline
\end{tabular}
\end{center}
\end{table}

In conclusion, we state a conjecture.

\begin{conjecture}
For each fixed $\mu \in \Z_+$ we have
\begin{flalign*}
&1) \sum\limits_{\lambda = 1}^{\infty} b_2(\lambda,\mu,m,\varepsilon)^2 = \frac{m^2 \varepsilon^2 + 2}{m^2 \varepsilon^2}; \\
&2) \sum\limits_{\lambda = 1}^{\infty} \lambda b_1(\lambda,\mu,1/\varepsilon,\varepsilon)^2 = 3\mu - 1; \\
&3) \sum\limits_{\lambda = 1}^{\infty} \lambda^2 b_1(\lambda,\mu,1/\varepsilon,\varepsilon)^2 = 13\mu^2 - 10 \mu+3; \\
&4) \sum\limits_{\lambda = 1}^{\infty} \frac{1}{\lambda} b_1(\lambda,\mu,1/\varepsilon,\varepsilon)^2 = 2^{\mu-1} \log{2} - \sum\limits_{j=1}^{\mu - 1} \frac{1}{j \cdot 2^j}; \\
&5) \sum\limits_{\lambda = 1}^{\infty} \frac{1}{2^\lambda} b_1(\lambda,\mu,1/\varepsilon,\varepsilon)^2 = \frac{2^{\mu-1} \binom{2\mu-2}{\mu-1}}{3^{2\mu+1}}. &&
\end{flalign*}
\end{conjecture}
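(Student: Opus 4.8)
The plan is to treat all five identities through the one-variable generating functions of $b_1$ and $b_2$ in the $\lambda$-direction. By Proposition~\ref{formula} rewritten in the $(\lambda,\mu)$ coordinates, both $b_1(\lambda,\mu,m,\varepsilon)$ and $b_2(\lambda,\mu,m,\varepsilon)$ are explicit, and by Proposition~\ref{quad1} for any $m>0$ the numbers $b_1(\lambda,\mu,1/\varepsilon,\varepsilon)^2$, $\lambda\geqslant1$, form a probability distribution. Identities 2)--5) are then exactly the first moment, the second moment, the $(-1)$-st moment, and the value at $x=\tfrac12$ of the generating function $G_\mu(x):=\sum_{\lambda\geqslant1}b_1(\lambda,\mu,1/\varepsilon,\varepsilon)^2\,x^\lambda$, while 1) is a sum of $b_2^2$ at general mass. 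So the whole problem reduces to understanding the transforms of $b_1,b_2$ in $\lambda$, and I would organize the proof around the two generating functions $f_\mu(w):=\sum_{\lambda}b_1(\lambda,\mu,m,\varepsilon)w^\lambda$ and $g_\mu(w):=\sum_{\lambda}b_2(\lambda,\mu,m,\varepsilon)w^\lambda$.

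First I would establish, from the Dirac equation (Proposition~\ref{Dirac}) rewritten as $b_1(\lambda,\mu)=\tfrac1s(b_1(\lambda,\mu-1)+m\varepsilon\,b_2(\lambda,\mu-1))$ and $b_2(\lambda,\mu)=\tfrac1s(b_2(\lambda-1,\mu)-m\varepsilon\,b_1(\lambda-1,\mu))$, where $s:=\sqrt{1+m^2\varepsilon^2}$, the closed recursions $g_\mu(w)=\frac{-m\varepsilon\,w}{s-w}f_\mu(w)$ and $f_\mu(w)=f_{\mu-1}(w)\,\frac{1-sw}{s-w}$, with base $f_1(w)=\frac{m\varepsilon\,w}{s-w}$; hence $f_\mu(w)=\frac{m\varepsilon\,w}{s-w}\big(\frac{1-sw}{s-w}\big)^{\mu-1}$. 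The crucial remark for identity 1) is that $\frac{1-sw}{s-w}$ is an inner function: $\big|\frac{1-sw}{s-w}\big|=1$ for $|w|=1$. Therefore, by Parseval on the unit circle, $\sum_\lambda b_2(\lambda,\mu)^2=\frac1{2\pi}\int_0^{2\pi}|g_\mu(e^{i\theta})|^2\,d\theta=\frac1{2\pi}\int_0^{2\pi}\frac{m^4\varepsilon^4}{|s-e^{i\theta}|^4}\,d\theta$, which is independent of $\mu$; evaluating at $\mu=1$ (a geometric sum, $b_2(\lambda,1)=-(\lambda-1)m^2\varepsilon^2 s^{-\lambda}$) gives the value $\frac{m^2\varepsilon^2+2}{m^2\varepsilon^2}$ asserted in 1).

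For 2) and 3) I would work on the unit circle, writing $\frac{1-sw}{s-w}=e^{i\phi(\theta)}$ with $\phi$ real (possible since the factor is inner), so that $f_\mu=f_1\,e^{i(\mu-1)\phi}$. In the Fourier picture multiplication by $\lambda$ corresponds to $-i\partial_\theta$, so $\sum_\lambda \lambda^k b_1(\lambda,\mu)^2=\langle f_\mu,(-i\partial_\theta)^k f_\mu\rangle$. Since the only $\mu$-dependence is the phase $e^{i(\mu-1)\phi}$, expanding $(-i\partial_\theta)^k(f_1 e^{i(\mu-1)\phi})$ produces a polynomial in $\mu$ of degree $k$, whose coefficients are the explicit integrals $\frac1{2\pi}\int(\phi')^j|f_1|^2\,d\theta$, computable by residues. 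This shows at once that $\sum_\lambda\lambda b_1^2$ is linear and $\sum_\lambda\lambda^2 b_1^2$ is quadratic in $\mu$, and the residue evaluations — or, equivalently, matching the small-$\mu$ values obtained from Proposition~\ref{formula} — pin them down to $3\mu-1$ and $13\mu^2-10\mu+3$.

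Identities 4) and 5) require the full generating function $G_\mu(x)$, which I would obtain as the Hadamard square of $f_\mu$, namely $G_\mu(x)=\frac1{2\pi i}\oint f_\mu(w)\,f_\mu(x/w)\,\frac{dw}{w}$; this is a rational function of $x$ whose closed form comes from the residue at the pole $w=x/s$, which has order $\mu$. For 5) one takes $x=\tfrac12$, and the $(\mu-1)$-fold derivative in this residue is what generates the central binomial coefficient $\binom{2\mu-2}{\mu-1}$ (the base case $\mu=1$ gives $G_1(\tfrac12)=\tfrac13$, fixing the power of $3$). For 4) one integrates, $\sum_\lambda\frac1\lambda b_1^2=\int_0^1 G_\mu(x)\,\frac{dx}{x}$, and the partial-fraction form of $G_\mu$ yields the logarithmic term $2^{\mu-1}\log 2$ together with the finite correction $\sum_{j=1}^{\mu-1}\frac1{j\cdot2^j}$. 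I expect the evaluation of $G_\mu(x)$ in closed form — in particular the order-$\mu$ residue computation, and the integration in 4) where both a $\log$ and a finite rational sum must be tracked — to be the main obstacle; identities 4) and 5) are the genuinely hard part, whereas 1)--3) should follow cleanly from the inner-function structure of $\frac{1-sw}{s-w}$.
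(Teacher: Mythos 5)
The paper offers no proof of this statement to compare against: it is posed as an open conjecture (the paper even remarks that part 1) \emph{cannot} be proved analogously to Proposition~\ref{quad1}), so your proposal must be judged on its own merits. Its core machinery is correct and genuinely effective for parts 1)--3). In the $(\lambda,\mu)$ coordinates, Proposition~\ref{Dirac} does give $b_1(\lambda,\mu)=\tfrac1s\left(b_1(\lambda,\mu-1)+m\varepsilon\,b_2(\lambda,\mu-1)\right)$ and $b_2(\lambda,\mu)=\tfrac1s\left(b_2(\lambda-1,\mu)-m\varepsilon\,b_1(\lambda-1,\mu)\right)$ with $s=\sqrt{1+m^2\varepsilon^2}$, and together with the boundary values $b_1(1,\mu)=m\varepsilon s^{-\mu}$, $b_2(1,\mu)=0$ (Example~\ref{edge}) your closed forms $f_\mu(w)=\frac{m\varepsilon w}{s-w}\left(\frac{1-sw}{s-w}\right)^{\mu-1}$ and $g_\mu(w)=\frac{-m\varepsilon w}{s-w}f_\mu(w)$ check against Proposition~\ref{formula} and against values computed directly from Proposition~\ref{Dirac} (beware: Tables~\ref{a-table} and~\ref{lm-table} contain misprinted entries, e.g.\ the cell $(3\varepsilon,3\varepsilon)$, so do not calibrate against them). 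Since $\frac{1-sw}{s-w}$ has modulus $1$ on $|w|=1$ and the only pole is at $w=s>1$ (so Parseval and term-by-term manipulations are legitimate), your argument really does prove 1): the sum is $\mu$-independent, and at $\mu=1$, where $b_2(\lambda,1)=-(\lambda-1)m^2\varepsilon^2 s^{-\lambda}$, it evaluates to $\frac{m^2\varepsilon^2+2}{m^2\varepsilon^2}$ — this settles the very case the paper singles out as resistant to its own method. I also verified your moment scheme: for $m\varepsilon=1$ the residue at $w=1/s$ gives $\frac{1}{2\pi}\int\phi'\,|f_1|^2\,d\theta=3$ and the base value $2$, hence $3\mu-1$ for 2); and 3) comes out quadratic in $\mu$ with values $6$ and $35$ at $\mu=1,2$, consistent with $13\mu^2-10\mu+3$. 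Writing out these residues would turn 1)--3) into complete proofs.

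For 4) and 5), however, there is a genuine gap, and it is sharper than the deferred order-$\mu$ residue computation you acknowledge: your own base case refutes the statement as printed, and you did not notice. You correctly compute $G_1(\tfrac12)=\sum_{\lambda\geqslant1}4^{-\lambda}=\tfrac13$, yet identity 5) at $\mu=1$ asserts $\frac{2^{0}\binom{0}{0}}{3^{3}}=\tfrac1{27}$; these cannot both hold, so claiming the base case ``fixes the power of $3$'' is a contradiction, not a calibration. Likewise, from $b_1(\lambda,2,1/\varepsilon,\varepsilon)^2=2^{-(\lambda+1)}(\lambda-2)^2$ one gets $\sum_\lambda\tfrac1\lambda\,b_1(\lambda,2,1/\varepsilon,\varepsilon)^2=2\log2-1$, whereas 4) as printed gives $2\log2-\tfrac12$ at $\mu=2$. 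The identities your method will actually deliver are $\sum_{\lambda}2^{-\lambda}b_1(\lambda,\mu,1/\varepsilon,\varepsilon)^2=\frac{2^{\mu-1}\binom{2\mu-2}{\mu-1}}{3^{2\mu-1}}$ (I checked $\mu=1,2,3$: $\tfrac13,\ \tfrac{4}{27},\ \tfrac{8}{81}$) and $\sum_{\lambda}\tfrac1\lambda\,b_1(\lambda,\mu,1/\varepsilon,\varepsilon)^2=2^{\mu-1}\left(\log2-\sum_{j=1}^{\mu-1}\tfrac1{j\,2^j}\right)$ (checked at $\mu=1,2,3$), which strongly suggests the conjecture contains misprints ($3^{2\mu+1}$ for $3^{2\mu-1}$, and the factor $2^{\mu-1}$ should multiply the whole difference). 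A proof attempt must either prove the stated identity or explicitly identify it as false as written; resolving that discrepancy, and then actually carrying out the order-$\mu$ residue for $G_\mu(x)$ and the integration $\int_0^1 G_\mu(x)\,\tfrac{dx}{x}$, is what still separates your plan from a proof of 4) and 5).
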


\begin{remark}
Despite the fact that the first equality in the conjecture is similar to the equalities in Proposition~\ref{quad1}, it cannot be proved analogously.
\end{remark}

\section{Conclusions}
In this work, we presented a lot of combinatorial identities in Feynman checkers. We used some of these identities to prove the following results: if there is at least one checker path to the point, then the probability to find an electron at this point is non-zero (Theorem~\ref{non-zero}); the expectation of the average electron velocity equals the time-average of the expectation of the instantaneous electron velocity (Proposition~\ref{velocity}). We also found the limit value of the average electron velocity (Theorem~\ref{limit_velocity}). There are several identities that are yet to be proved and possibly there are many interesting identities that are yet to be discovered.


\begin{thebibliography}{DoSn}
\bibitem{Quant} Ambainis~A., Bach~E., Nayak~A., Vishwanath~A., Watrous~J., One-Dimensional Quantum Walks.  Proc. of the 33rd Annual ACM Symposium on Theory of Computing (2001), 37–49.
\bibitem{Bogdanov} Bogdanov~I., Feynman checkers: the probability of direction reversal, preprint (2020).
\bibitem{Feynman} Feynman~R.P. and Hibbs~A.R., Quantum Mechanics and Path Integrals. New York: McGraw-Hill 1965.
\bibitem{weak_limit} Grimmett~G.R., Janson~S., Scudo~P.F., Weak limits for quantum random walks, Phys. Rev. E 69 (2004), 026119.
\bibitem{Jacobson} Foster~B.Z., Jacobson~T., Spin on a  4D Feynman Checkerboard. Int. J. Theor. Phys. 56, 129–144 (2017).
\bibitem{Konno} Konno~N., Quantum Walks. In: U.~Franz, M.~Sch\"urmann (eds) Quantum potential theory, Lect. Notes Math. 1954. Springer, Berlin, Heidelberg, 2008.
\bibitem{Narlikar} Narlikar~J., Path Amplitudes for Dirac particles, J. Indian Math. Society, 36, (1972) 9–32.
\bibitem{oeis} OEIS Foundation Inc. (2020), The On-Line Encyclopedia of Integer Sequences, 
\url{http://oeis.org}.
\bibitem{Ord} Ord~G.N., Classical particles and the Dirac equation with an electromagnetic field, Chaos, Solitons \& Fractals 8:5 (1997), 727-741.
\bibitem{main} Skopenkov~M. and Ustinov~A., Feynman checkers: towards algorithmic quantum theory, preprint, 2020, \url{https://arxiv.org/abs/2007.12879}
\bibitem{review} Venegas-Andraca~S.E., Quantum walks: a comprehensive review, Quantum Inf. Process. 11 (2012), 1015--1106.
\end{thebibliography}
\end{document}